\newcommand*{\TechReport}{}%
\def\ps@headings{%
\def\@oddhead{\mbox{}\scriptsize\rightmark \hfil \thepage}%
\def\@evenhead{\scriptsize\thepage \hfil\leftmark\mbox{}}%
\def\@oddfoot{}%
\def\@evenfoot{}}
\newcommand*{\IncludeAppendix}{}%
\newcommand*{\IncludeProofs}{}%
\newtheorem{theorem}{Theorem}
\newtheorem{lemma}[theorem]{Lemma}
\newtheorem{definition}{Definition}
\newtheorem{corollary}{Corollary}
\newtheorem{observation}{Observation}
\newtheorem{example}{Example}
\newlength{\grafflecm}
\newcommand{\timec}{\proc{TimeConf}}
\newcommand{\twophase}{\emph{two-phase}}
\newcommand{\order}{\emph{order}}
\newcommand{\dn}{{D_n}}
\newcommand{\dc}{{D_c}}
\newcommand{\Ng}{{N_G}}
\newcommand{\clat}{\Delta}
\newcommand{\Pset}{\mathbb{P}k}
\newcommand{\Porti}{\mathbb{P}r_i}
\newcommand{\timeu}{u^T}
\newcommand{\PktI}{\mathbb{P}\mathbb{I}}
\newcommand{\rptp}{\textsc{ReversePTP}}
\newcommand{\step}{phase}
\newcommand{\timef}{\proc{TimeFlip}}
\begin{document}

\ifdefined\TechReport
\else
\conferenceinfo{SOSR2015,}{June 17 - 18, 2015, Santa Clara, CA, USA} 
\CopyrightYear{2015}
\crdata{ISBN 978-1-4503-3451-8/15/06.\\
DOI: http://dx.doi.org/10.1145/2774993.2775001.}

\fi

\date{}

\title{
\ifdefined\JournalVer
Timed Consistent Network Updates \\ in Software Defined Networks
\else
\ifdefined\TechReport
\hspace{15mm}
\fi
Timed Consistent Network Updates
\ifdefined\TechReport
\newline
\newline
				\large Technical Report\thanks{This technical report is an extended version of~\cite{TimedConsistent}, which was accepted to the ACM SIGCOMM Symposium on SDN Research (SOSR) '15, Santa Clara, CA, US, June 2015.}, May 2015
\fi
\fi
}

\author{
\ifdefined\JournalVer
Tal Mizrahi, Efi Saat, Yoram Moses
\else
Tal Mizrahi, Efi Saat, Yoram Moses\thanks{\scriptsize The Israel Pollak academic chair at Technion.}
\fi
\\ Technion --- Israel Institute of Technology
\\ \{dew@tx, efisaat@tx, moses@ee\}.technion.ac.il
}        

\maketitle

\thispagestyle{empty}

\begin{sloppypar}
\ifdefined\JournalVer
\begin{abstract}
\else
\section*{Abstract}
\fi
Network updates such as policy and routing changes occur frequently in Software Defined Networks (SDN). 
Updates should be performed consistently, preventing temporary disruptions, and should require as little overhead as possible. Scalability is increasingly becoming an essential requirement in SDN. 
In this paper we propose to use time-triggered network updates to achieve consistent updates. 
Our proposed solution requires lower overhead than existing update approaches, without compromising the consistency during the update. 
We demonstrate that accurate time enables far more scalable consistent updates in SDN than previously available. In addition, it provides the SDN programmer with fine-grained control over the tradeoff between consistency and scalability.
\ifdefined\JournalVer
\end{abstract}
\fi
\end{sloppypar}

\ifdefined\TechReport
\else
\category {C.2.3} {Computer-Communication Networks} {Network Operations}

\begin{keywords}
SDN, PTP, IEEE 1588, clock synchronization, management, time.
\end{keywords}
\fi

\ifdefined\JournalVer
{\let\thefootnote\relax\footnotetext{
This manuscript is an extended version of~\cite{TimedConsistent}, which was accepted to the ACM SIGCOMM Symposium on SDN Research (SOSR) '15, Santa Clara, CA, US, June 2015.
This submission includes the following new technical contributions:

$\bullet$ Proofs of the lemmas. Proofs were excluded from the preliminary version of this paper~\cite{TimedConsistent}.

$\bullet$ This manuscript includes additional experimental results, namely the results depicted in Fig.~\ref{fig:DurationVs_del} and~\ref{fig:DurationVsDn}.

Tal Mizrahi, Efi Saat, and Yoram Moses are with the Department of Electrical Engineering, Technion, Haifa 32000,
Israel (e-mails: dew@tx.technion.ac.il, moses@ee.technion.ac.il). Yoram Moses is the Israel Pollak academic chair at Technion.

}}
\fi

\section{Introduction}
\subsection{Background}
\begin{sloppypar}
Traditional network management systems are in charge of initializing the network, monitoring it, and allowing the operator to apply occasional changes when needed. Software Defined Networking (SDN), on the other hand, requires a central controller to routinely perform frequent policy and configuration updates in the network.

The centralized approach used in SDN introduces challenges in terms of \emph{consistency} and \emph{scalability}. 
The controller must take care to minimize network anomalies during update procedures, such as packet drops or misroutes caused by temporary \textbf{inconsistencies}. 
Updates must also be planned with \textbf{scalability} in mind; update procedures must scale with the size of the network, and cannot be too complex. In the face of rapid configuration changes, the update mechanism must allow a high update rate.
\end{sloppypar}

Two main methods for consistent network updates have been thoroughly studied in the last few years.
\begin{itemize}[leftmargin=*]
	\item \textbf{Ordered updates.} This approach uses a sequence of configuration commands, whereby the \emph{order} of execution guarantees that no anomalies are caused in intermediate states of the procedure~\cite{francois2007avoiding,jin2014dynamic,vanbever2011seamless,liu2013zupdate}; at each phase the controller waits until all the switches have completed their updates, and only then invokes the next phase in the sequence. 
	\item \textbf{Two-phase updates.} In the \twophase\ approach~\cite{reitblatt2012abstractions,katta2013incremental}, configuration version tags are used to guarantee consistency; in the first phase the new configuration is installed in all the switches in the middle of the network, and in the second phase the ingress switches are instructed to start using a version tag that represents the new configuration.
During the update procedure every switch maintains two sets of entries: one for the old configuration version, and one for the new version. The version tag attached to the packet determines whether it is processed according to the old configuration or the new one. After the packets carrying the old version tag are drained from the network, garbage collection is performed on the switches, removing the duplicate entries and leaving only the new configuration.
\end{itemize}

In previous work~\cite{hotsdn} we argued that time is a powerful abstraction for coordinating network updates. We defined an extension~\cite{timeconf} to the OpenFlow protocol~\cite{OpenFlow1.4} that allows time-triggered operations. This extension has been approved and integrated into OpenFlow~1.5~\cite{OpenFlow1.5}, and into the OpenFlow~1.3.x extension package~\cite{OpenFlow1.3ext}.

\subsection{Time for Consistent Updates}
\begin{sloppypar}
In this paper we study the use of \emph{accurate time} to trigger \emph{consistent} network updates. 
We define a time-based \emph{order} approach, where each phase in the sequence is scheduled to a different execution time, and a time-based \twophase\ approach, where each of the two phases is invoked at a different time.

We show how the \emph{order} and \emph{two-phase} approaches benefit from time-triggered phases.
Contrary to the conventional \order\ and \twophase\ approaches, timed updates do not require the controller to wait until a phase is completed before invoking the next phase, significantly simplifying the controller's involvement in the update process, and reducing the update duration. 

The time-based method significantly reduces the time duration required by the switches to maintain duplicate policy rules for the same flow.
In order to accommodate the duplicate policy rules, switch flow tables should have a set of spare flow entries~\cite{reitblatt2012abstractions,katta2013incremental} that can be used for network updates. Timed updates use each spare entry for a shorter duration than untimed updates, allowing higher scalability.

Accurate time synchronization has evolved over the last decade, as the Precision Time Protocol (PTP)~\cite{IEEE1588} has become a common feature in commodity switches, allowing sub-microsecond accuracy in practical use cases (e.g., ~\cite{ChinaMobile}). However, even if switches have perfectly synchronized clocks, it is not guaranteed that updates are \emph{executed} at their scheduled times.
We argue that a carefully designed switch can schedule updates with a high degree of accuracy. Moreover, we show that even if switches are not optimized for accurate scheduling, then the timed approach outperforms conventional update approaches.

The use of time-triggered updates accentuates a \textbf{tradeoff} between update \textbf{scalability} and \textbf{consistency}. At one end of the scale, consistent updates come at the cost of a potentially long update duration, and expensive memory waste due to rule duplication.\footnote{As shown in~\cite{katta2013incremental}, the \textbf{duration} of an update can be traded for the update rate. The flow table will typically include a limited number of excess entries that can be used for duplicated rules. By reducing the update duration, the excess entries are used for a shorter period of time, allowing a higher number of updates per second.} At the other end, a network-wide update can be invoked simultaneously, using \timec ~\cite{hotsdn}, allowing a short update time, preventing the need for rule duplication, but yielding a brief period of inconsistency.  In this paper we show that timed updates can be tuned to any intermediate point along this scale.
\end{sloppypar}

\subsection{Contributions}
The main contributions of this paper are as follows.

\begin{itemize}[leftmargin=*]
	\item We propose to use time-triggered network updates in a way that requires a lower overhead than existing update approaches without compromising the consistency during the update.
	\item We show that timed consistent updates require a shorter duration than existing consistent update methods.
	\item We define an inconsistency metric, allowing to quantify how consistent a network update is.
	\item We show that accurate time provides the SDN programmer with a knob for fine-tuning the tradeoff between consistency and scalability.
	\item We present an experimental evaluation on a~50-node testbed, demonstrating the significant advantage of timed updates over other update methods.
\end{itemize}

\ifdefined\IncludeProofs
\else
For the sake of brevity, proofs have been omitted from this paper, and are presented in~\cite{TimedConsistentTR}.
\fi

\section{Time-based Consistent Updates}
We now describe the concept of time-triggered consistent updates.
We assume that switches keep local clocks that are synchronized to a central reference clock by a synchronization protocol, such as the Precision Time Protocol (PTP)~\cite{IEEE1588} or \rptp~\cite{ispcsrptp,hotsdnrptp}, or by an accurate time source such as GPS.
The controller sends network update messages to switches using an SDN protocol such as OpenFlow~\cite{OpenFlow1.5}. An update message may specify \emph{when} the corresponding update is scheduled to be performed.

\begin{figure}[htbp]
	\fbox{ 
	\centering
  \begin{subfigure}[t]{.24\textwidth}
  \includegraphics[height=4\grafflecm]{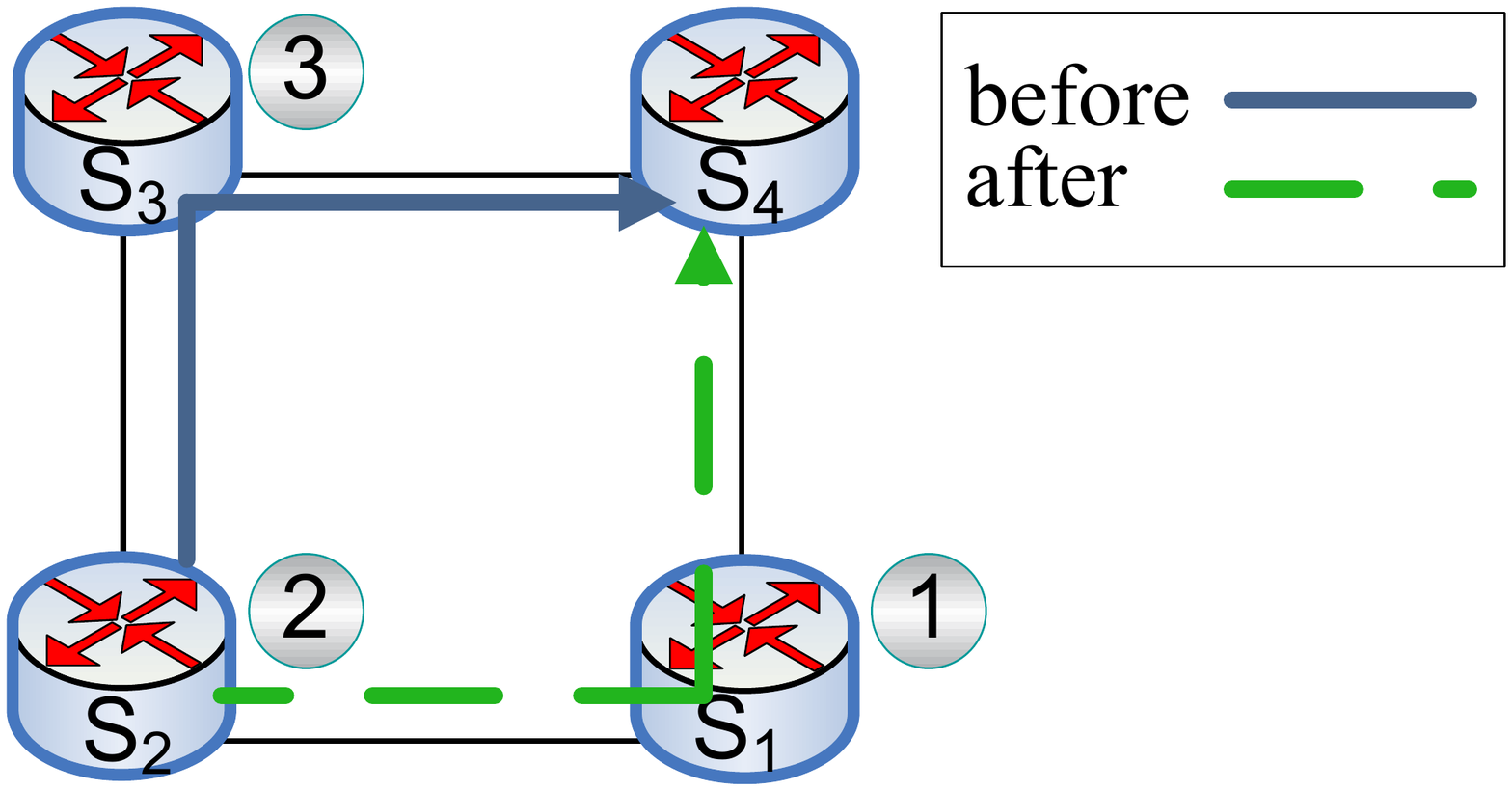}
	\captionsetup{justification=raggedright}
  \caption{Ordered update of \\ a path.}
  \label{fig:PathUC}
  \end{subfigure}%
  \begin{subfigure}[t]{.23\textwidth}
  \centering
  \includegraphics[height=4\grafflecm]{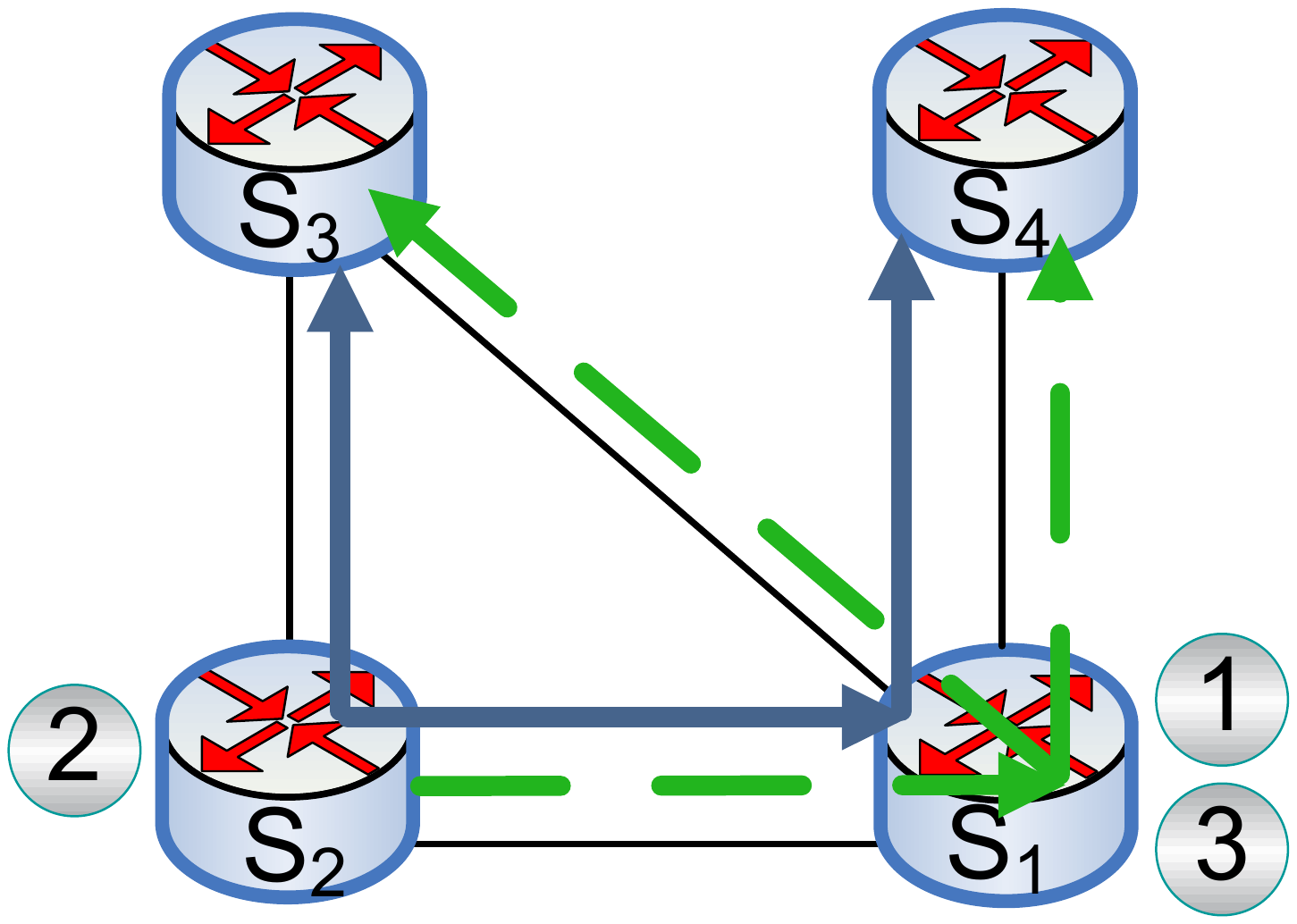}
	\captionsetup{justification=centering}
  \caption{Two-phase update of a multicast distribution tree.}
  \label{fig:PathMC}
  \end{subfigure}%
  }
  \caption{Update procedure examples.}
  \label{fig:UpdateProc}
\end{figure}

\subsection{Ordered Updates}
Fig.~\ref{fig:PathUC} illustrates an ordered network update. We would like to reconfigure the path of a traffic flow from the `before' to the `after' configuration.
An ordered update proceeds as described in Fig.~\ref{fig:OrdUpdate}; the phases in the procedure correspond to the numbers in Fig.~\ref{fig:PathUC}.

\begin{figure}[!h]
\hrule
  \begin{codebox}
    \Procname{$\proc{Untimed Ordered Update}$}
		\li Controller sends the `after' configuration to $S_1$.
		\li Controller sends the `after' configuration to $S_2$.
		\li Controller updates $S_3$ (garbage collection).
  \end{codebox}
  \hrule
  \caption{Ordered update procedure for the scenario of~Fig.~\ref{fig:PathUC}.}
  \label{fig:OrdUpdate}
\end{figure}

The ordered update procedure guarantees that if every phase is performed after the previous phase was completed, then no packets are dropped during the process. 
A \textbf{time-based} order update procedure is described in Fig.~\ref{fig:TimedOrdUpdate}.

\begin{figure}[htbp]
\hrule
  \begin{codebox}
  	\setcounter{codelinenumber}{-1}
    \Procname{$\proc{Timed Ordered Update}$}
		\li Controller sends timed updates to all switches.
		\li $S_1$ enables the `after' configuration at time $T_1$.
		\li $S_2$ enables the `after' configuration at time $T_2>T_1$.
		\li $S_3$ performs garbage collection at time $T_3>T_2$.
  \end{codebox}
  \hrule
  \caption{Timed Ordered update procedure for the scenario of~Fig.~\ref{fig:PathUC}.}
  \label{fig:TimedOrdUpdate}
\end{figure}

Notably, the ordered approach requires the controller to be involved in the entire update procedure, making the update process sensitive to the load on the controller, and to the communication delays at the time of execution. In contrast, in the time-base protocol, the controller is only involved in phase~0, and if $T_1$ is timed correctly, the update process is not influenced by these issues.

\subsection{Two-phase Updates}
An example of a \twophase\ update is illustrated in Fig.~\ref{fig:PathMC}; the figure depicts a multicast distribution tree through a network of three switches. Multicast packets are distributed along the paths of the `before' tree. We would like to reconfigure the distribution tree to the `after' state.

\begin{figure}[!h]
\hrule
  \begin{codebox}
    \Procname{$\proc{Untimed Two-phase Update}$}
		\li Controller sends the `after' configuration to $S_1$.
		\li Controller instructs $S_2$ to start using the `after' \\ configuration with the new version tag.
		\li Controller updates $S_1$ (garbage collection).
  \end{codebox}
  \hrule
  \caption{Two-phase update procedure for the scenario of~Fig.~\ref{fig:PathMC}.}
  \label{fig:TwoPhaseUpdate}
\end{figure}

The \twophase\ procedure~\cite{reitblatt2012abstractions,katta2013incremental} is described in Fig.~\ref{fig:TwoPhaseUpdate}. In the first phase, the new configuration is installed in $S_1$, instructing it to forward packets that have the new version tag according to the `after' configuration. In the second phase, $S_2$ is instructed to forward packets according to the `after' configuration using the new version tag. The `before' configuration is removed in the third phase. As in the ordered approach, the \twophase\ procedure requires every phase to be invoked after it is guaranteed that the previous phase was completed.

\begin{figure}[!h]
\hrule
  \begin{codebox}
  	\setcounter{codelinenumber}{-1}
    \Procname{$\proc{Timed Two-phase Update}$}
		\li Controller sends timed updates to all switches.
		\li $S_1$ enables the `after' configuration at time $T_1$.
		\li $S_2$ enables the `after' configuration with the \\ new version tag at time $T_2>T_1$.
		\li $S_1$ performs garbage collection at time $T_3>T_2$.
  \end{codebox}
  \hrule
  \caption{Timed two-phase update procedure for the scenario of~Fig.~\ref{fig:PathMC}.}
  \label{fig:TimedTwoPhaseUpdate}
\end{figure}

In the timed \twophase\ approach, specified in Fig.~\ref{fig:TimedTwoPhaseUpdate}, phases~1, 2, and~3 are scheduled in advance by the controller. The switches then execute phases~1, 2, and~3 at times $T_1$, $T_2$, and $T_3$, respectively.

\ifdefined\TechReport
\else
\vspace{3mm}
\fi
\subsection{k-Phase Consistent Updates}
The \order\ approach guarantees consistency if updates are performed according to a specific order. More generally, we can view an ordered update as a sequence of $k$ phases, where in each phase $j$, a set of $N_j$ switches is updated. For each phase $j$, the updates of phase $j$ must be completed before any update of phase $j+1$ is invoked. 

The \twophase\ approach is a special case, where $k=2$; in the first phase all the switches in the middle of the network are updated with the new policy, and in the second phase the ingress switches are updated to start using the new version tag.

\subsection{The Overhead of Network Updates}
Both the \order\ method and the \twophase\ method require duplicate configurations to be present during the update procedure. In each of the protocols of Fig.~\ref{fig:OrdUpdate}-\ref{fig:TimedTwoPhaseUpdate}, both the `before' and the `after' configurations are stored in the switches' expensive flow tables from phase~1 to phase~3. The unnecessary entries are removed only after garbage collection is performed in phase~3.

In the timed protocols of Fig.~\ref{fig:TimedOrdUpdate} and~\ref{fig:TimedTwoPhaseUpdate} the switches receive the update messages in advance (phase~0), and can temporarily store the new configurations in a non-expensive memory. The switches install the new configuration in the expensive flow table memories only at the scheduled times, thereby limiting the period of duplication to the duration from phase~1 to phase~3.

The overhead cost of the duplication depends on the time elapsed between phase~1 and phase~3. Hence, throughout the paper we use the \emph{update duration} as a metric for quantifying the overhead of a consistent update that includes a garbage collection phase.

\section{Terminology and Notations}
\label{TermSec}
\subsection{The Network Model}
We reuse some of the terminology and notations of~\cite{reitblatt2012abstractions}.
Our system consists of $N+1$ nodes: a controller $c$, and a set of $N$ switches, $\mathbb{S}=\{S_1,\ldots,S_N\}$.
A \emph{packet} is a sequence of bits, denoted by $pk \in \Pset$, where $\Pset$ is the set of possible packets in the system.
Every switch $S_i \in \mathbb{S}$ has a set $\Porti$ of ports. 

The sources and destinations of the packets are assumed to be external; packets are received from the `outside world' through a subset of the switches' ports, referred to as \emph{ingress ports}. An \emph{ingress switch} is a switch that has at least one ingress port. Every packet $pk$ is forwarded through a sequence of switches $(S_{i_1}, \ldots, S_{i_m})$, where the first switch $S_{i_1}$ is an ingress switch. The last switch in the sequence, $S_{i_m}$, forwards the packet through one of its ports to the outside world.

When a packet $pk$ is received by a switch $S_i$ through port $p \in \Porti$, the switch uses a forwarding function $\mathbb{F}_i : \Pset \times \Porti \longrightarrow \mathbb{A}$, where $\mathbb{A}$ is the set of possible actions a switch can perform, e.g., `forward the packet through port q'. The packet content and the port through which the packet was received determine the action that is applied to the packet.

It is assumed that every switch maintains a local clock. As is standard in the literature (e.g.,~\cite{lamport1985synchronizing}), we distinguish between \emph{real time}, an assumed Newtonian time frame that is not directly observable, and \emph{local clock time}, which is the time measured on one of the switches' clocks.
We denote values that refer to real time by lowercase letters, e.g. $t$, and values that refer to clock time by uppercase, e.g., $T$.

We define a \emph{packet instance} to be a tuple $(pk,S_i,p,t)$, where $pk \in \Pset$ is a packet, $S_i \in \mathbb{S}$ is the ingress switch through which the packet is received, $p \in \Porti$ is the ingress port at switch $S_i$, and $t$ is the time at which the packet instance is received by $S_i$.

\subsection{Network Updates}
We define a \emph{singleton update} $u$ of switch $S_i$ to be a partial function, $u : \Pset \times \Porti \rightharpoonup \mathbb{A}$. A switch applies a singleton update, $u$, by replacing its forwarding function, $\mathbb{F}_i$ with a new forwarding function, $\mathbb{F}'_i$, that behaves like $u$ in the domain of $u$, and like $\mathbb{F}_i$ otherwise.
We assume that every singleton update is triggered by a set of one or more messages sent by the controller to \textbf{one} of the switches.

We define an \emph{update} $U$ to be a set of singleton updates $U=\{u_1, \ldots, u_m\}$. 

\begin{sloppypar}
We define an update procedure, $\mathbb{U}$, to be a set $\mathbb{U}=\{(u_1,t_1,\step(u_1)), \ldots, (u_m,t_m,\step(u_m))\}$ of 3-tuples, such that for all $1 \leq j \leq m$, we have that $u_j$ is a singleton update, $\step(u_j)$ is a positive integer specifying the \emph{phase number} of $u_j$, and $t_j$ is the time at which $u_j$ is performed. Moreover, it is required that for every $1 \leq i,j \leq m$ if $\step(u_i) < \step(u_j)$ then $t_i < t_j$. This definition implies that an update procedure is a sequence of one or more phases, where each phase is performed after the previous phase is completed, but there is no guarantee about the order of the singleton updates of each phase. 
\end{sloppypar}

A $k$-phase update procedure is an update procedure $\mathbb{U}=\{(u_1,t_1,\step(u_1)), \ldots, (u_m,t_m,\step(u_m))\}$ in which for all $1 \leq j \leq m$ we have $1 \leq \step(u_j) \leq k$, and for all $1 \leq i \leq k$ there exists an update $u_j$ such that $(u_j,t_j,i) \in \mathbb{U}$.


\begin{sloppypar}
We define a \emph{timed} singleton update $\timeu$ to be a pair $(u,T)$, where $u$ is a singleton update, and $T$ is a clock value that represents the scheduled time of $u$. We assume that every switch maintains a local clock, and that when a switch receives a message indicating a timed singleton update $\timeu$ it implements the update as close as possible to the instant when its local clock reaches the value $T$. Similar to the definition of an update procedure, we define a \emph{timed update procedure} $\mathbb{U}^T$ to be a set $\mathbb{U}^T=\{(\timeu_1,t_1,\step(u_1)), \ldots, (\timeu_m,t_m,\step(u_m))\}$.
\end{sloppypar}

\begin{sloppypar}
An update procedure $\mathbb{U}=\{(u_1,t_1,\step(u_1)), \ldots, (u_m,t_m,\step(u_m))\}$ and a timed update procedure $\mathbb{U}^T=\{({v^T}_1,t_1,\step({v^T}_1)), \ldots, ({v^T}_n,t_n,\step({v^T}_n))\}=\{((v_1,T_1),t_1,\step({v^T}_1)), \ldots, ((v_n,T_n),t_n,\step({v^T}_n))\}$ are said to be \emph{similar}, denoted by $\mathbb{U}^T \sim \mathbb{U}$ if $m=n$ and for every $1 \leq j \leq m$ we have $u_j=v_j$ and $\step(u_j)=\step(v_j)$.
\end{sloppypar}

\begin{sloppypar}
Given an \emph{untimed} update, $U$, the original configuration, before any of the singleton updates of $U$ takes place, is given by the set of forwarding functions, $\{ \mathbb{F}_1, \ldots, \mathbb{F}_N \}$.
We denote the new configuration, after all the singleton updates of $U$ have been implemented, by $\{\mathbb{F}'_1, \ldots, \mathbb{F}'_N\}$.

We define \emph{consistent forwarding} based on the per-packet consistency definition of~\cite{reitblatt2012abstractions}. Intuitively, a packet is consistently forwarded if it is processed either according to the new configuration or according to the old one, but not according to a mixture of the two.
Formally, let $(pk,S_{i_1},p_1,t)$ be a packet instance that is forwarded through a sequence of switches $S_{i_1}, S_{i_2}, \ldots, S_{i_m}$ through ports $p_1, p_2, \ldots, p_m$, respectively, and is assigned the actions $a_1, a_2, \ldots, a_m$.
The packet instance $(pk,S_{i_1},p_1,t)$ is said to be \emph{consistently forwarded} if 
one of the following is satisfied:

(i) $\mathbb{F}_{i_j}(pk, p_j)=a_j$ for all $1 \leq j \leq m$, or 

(ii) $\mathbb{F'}_{i_j}(pk, p_j)=a_j$ for all $1 \leq j \leq m$.
\end{sloppypar}

A packet instance that is not consistently forwarded, is said to be \emph{inconsistently forwarded}. 

\begin{table}[!h]
    \begin{tabular}{| l | p{6.5cm}|}
    \hline
	  $\dc$ & An upper bound on the \emph{controller-to-switch delay}, including the network latency, and the internal switch delay until completing the update. \\
	  $\dn$ & An upper bound on the end-to-end \emph{network delay}. \\
	  $\clat$ & An upper bound on the time interval between the transmission times of two consecutive update messages sent by the controller. \\
	  $\delta$ & An upper bound on the scheduling error; an update that is scheduled to be performed at $T$ is performed in practice during the time interval $[T,T+\delta]$. \\
	  $T_{su}$ & The timed update setup time; in order to invoke a timed update that is scheduled to time $T$, the controller sends the update messages no later than at $T-T_{su}$. \\
	  \hline
    \end{tabular}
    \caption{Delay-related Notations}
    \label{Notations}
\end{table}

\subsection{Delay-related Notations}
Table~\ref{Notations} presents key notations related to delay and performance. The attributes that play a key role in our analysis are $\dc$, $\dn$, and $\delta$. These attributes are discussed further in Section~\ref{BoundSec}.


\section{Upper and Lower Bounds}
\label{BoundSec}
\subsection{Delay Upper Bounds}
\label{DelayUpperSec}
Both the \order\ and the \twophase\ approaches implicitly assume the existence of two upper bounds, $\dc$ and $\dn$ (see Table~\ref{Notations}):

\begin{itemize}[leftmargin=*]
	\item \textbf{$\dc$}: both approaches require previous phases in the update procedure to be completed before invoking the current phase. Therefore, after sending an update message, the controller must wait for a period of $\dc$ until it is guaranteed that the corresponding update has been performed; only then can it invoke the next phase in the procedure. Alternatively, explicit acknowledgments can be used to indicate update completions; when a switch completes the update it notifies the controller. Unfortunately, OpenFlow~\cite{OpenFlow1.5,McKeownOpenflow} currently \emph{does not} support such an acknowledgment mechanism. Hence, one can either use other SDN protocols that support explicit acknowledgment (as was assumed in~\cite{jin2014dynamic}), or wait for a period of $\dc$ until the switch is guaranteed to complete the update.
	\item \textbf{$\dn$}: garbage collection can take place after the update procedure has completed, and all en-route packets have been drained from the network. Garbage collection can be invoked either after waiting for a period of $\dn$ after completing the update, or by using \emph{soft timeouts}.\footnote{Soft timeouts are defined in the OpenFlow protocol~\cite{OpenFlow1.5} as a means for garbage collection; a flow entry that is configured with a soft timeout, $\dn$, is cleared if it has not been used for a duration $\dn$.} Both of these approaches assume there is an upper bound, $\dn$, on the end-to-end network latency.
\end{itemize}

\begin{sloppypar}
Is it practical to assume that the upper bounds $\dc$ and $\dn$ exist? Network latency is often modeled using long-tailed distributions such as exponential or Gamma~\cite{mukherjee1992dynamics,gurewitz2006one}, implying that network latency is often \textbf{unbounded}.
\end{sloppypar}

\begin{sloppypar}
We demonstrate the long-tailed behavior of network latency by analyzing measurements performed on production networks. We analyze 20 delay measurement datasets from~\cite{pinger,AMP} taken at various sites over a one-year period, from November 2013 to November 2014.
\ifdefined\IncludeAppendix
\footnote{Details about the measurements can be found in Appendix~\ref{DatasetApp}.} 
\fi
The measurements capture the round-trip time (RTT) using ICMP Echo requests. The measurements show (Fig.~\ref{fig:delay}) that in some networks the~$99.999^{th}$ percentile is almost two orders of magnitude higher than the average RTT. Table~\ref{table:TailLatency} summarizes the ratio between tail latency values and average values in the~20 traces we analyzed.
\end{sloppypar}

\begin{figure}[htbp]

	\centering
  \fbox{\includegraphics[width=.45\textwidth]{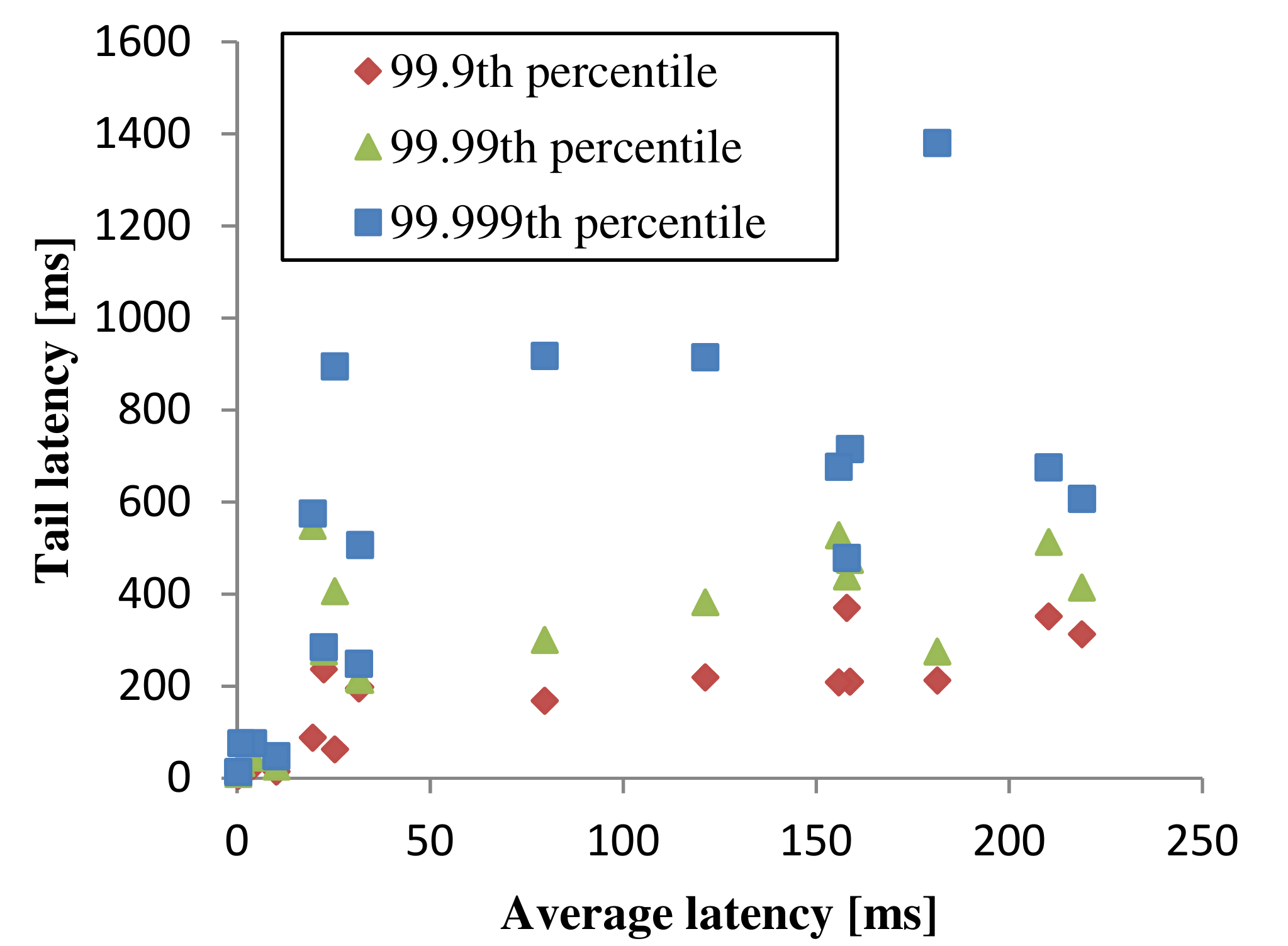}}
	\captionsetup{justification=centering}
  \caption{Long-tail latency}
  \label{fig:delay}

\end{figure}

\begin{table}[htbp]
		\centering
    \begin{tabular}{| p{2.1cm}<{\centering} | p{2.1cm}<{\centering} | p{2.1cm}<{\centering} |}
    \hline
    $99.9^{th}$ percentile & $99.99^{th}$ percentile & $99.999^{th}$ percentile \\ \hline \hline
    4.88 & 10.49 & 19.45 \\ \hline 
    \end{tabular}
    \caption{The mean ratio between the tail latency and the average latency.}
    \label{table:TailLatency}
\end{table}

In typical networks we expect $\dn$ to have long-tailed behavior. Similar long-tailed behavior has also been shown for $\dc$ in~\cite{jin2014dynamic,rotsos2012oflops}.

At a first glance, these results seem troubling: if network latency is indeed unbounded, neither the \order\ nor the \twophase\ approaches can guarantee consistency, since the controller can never be sure that the previous phase was completed before invoking the next phase. 

In practice, typical approaches will not require a true upper bound, but rather a latency value that is exceeded with a sufficiently low probability. 
Service Level Agreement (SLA) in carrier networks is a good example of this approach; per the MEF 10.3 specification~\cite{MEF10.3}, a Service Level Specification (SLS) defines not only the mean delay, but also the Frame Delay Range (FDR), and the percentile defining this range.
Thus, service providers must guarantee that the rate of frames that exceed the delay range is limited to a known percentage.

Throughout the paper we use $\dc$ and $\dn$, referring to the upper bounds of the delays. In practice, these may refer to a sufficiently high percentile delay. Our analysis in Section~\ref{ConsKnobSec} revisits the upper bound assumption. 





\begin{figure*}[t]

  \centering
  \fbox{\includegraphics[width=.98\textwidth]{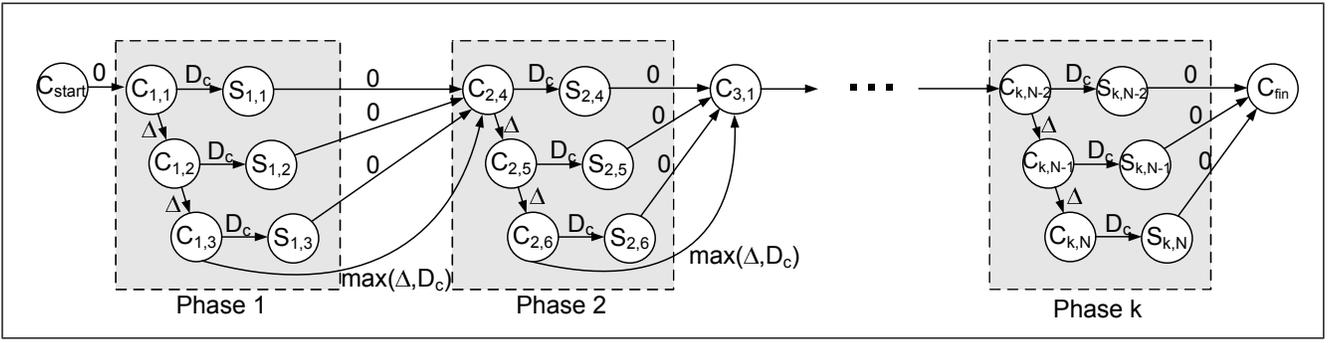}}
	\captionsetup{justification=centering}
  \caption{PERT graph of a $k$-phase update.}
  \label{fig:PERTkStep}

\end{figure*}

\subsection{Delay Lower Bounds}
\label{DelayLowerSec}
\begin{sloppypar}
Throughout the paper we assume that the lower bounds of the network delay and the controller-to-switch delay are zero. This assumption simplifies the presentation, although the model can be extended to include non-zero lower bounds on delays.
\end{sloppypar}

\subsection{Scheduling Accuracy Bound}
\label{SchedAccSec}
As defined in Table~\ref{Notations}, $\delta$ is an upper bound on the scheduling error, indicating how accurately updates are scheduled; an update that is scheduled to take place at time $T$ is performed in practice during the interval $[T,T+\delta]$.\footnote{An alternative representation of $\delta$ assumes a symmetric error, $T \pm \delta / 2$. The two approaches are equivalent.} A switch's scheduling accuracy depends on two factors: (i) how accurately its clock is synchronized to the system's reference clock, and (ii) its ability to perform real-time operations.

Most high-performance switches are implemented as a combination of hardware and software components.	
A scheduling mechanism that relies on the switch's software may be affected by the switch's operating system and by other running tasks, consequently affecting the scheduling accuracy. Furthermore, previous work~\cite{jin2014dynamic,rotsos2012oflops} has shown high variability in rule installation latencies in Ternary Content Addressable Memories (TCAMs), resulting from the fact that a TCAM update might require the TCAM to be rearranged.

Nevertheless, existing switches and routers practice real-time behavior, with a predictable guaranteed response time to important external events. Traditional protection switching and fast reroute mechanisms require the network to react to a path failure in less than 50~milliseconds, implying that each individual switch or router reacts within a few milliseconds, or in some cases less than one millisecond (e.g.~\cite{juniper}). Operations, Administration, and Maintenance (OAM) protocols such as the IEEE 802.1ag~\cite{IEEE802.1ag} require faults to be detected within a strict timing constraint of $\pm 0.42$ milliseconds.\footnote{Faults are detected using Continuity Check Messages (CCM), transmitted every 3.33 ms. A fault is detected when no CCMs are received for a period of $11.25 \pm 0.42$ ms.} 

\begin{sloppypar}
Measures can be taken to implement accurate scheduling of timed updates: 
\end{sloppypar}
\begin{itemize}[leftmargin=*]
	\item Common real-time programming practices can be applied to ensure guaranteed performance for time-based update, by assigning a constant fraction of time to timed updates.
	\item When a switch is aware of an update that is scheduled to take place at time $T_s$, it can avoid performing heavy maintenance tasks near this time, such as TCAM entry rearrangement.
	\item Untimed update messages received slightly before time $T_s$ can be queued and processed after the scheduled update is executed.
	\item If a switch receives a time-based command that is scheduled to take place at the same time as a previously received command, it can send an error message to the controller, indicating that the last received command cannot be executed.
	\item It has been shown that timed updates can be scheduled with a very high degree of accuracy, on the order of 1~microsecond, using \timef~\cite{Infocom-TimeFlip}. This approach provides a high scheduling accuracy, potentially at the cost of some overhead in the switch's flow tables. 
\end{itemize}
   

\begin{observation}
\label{deltaObs}
In typical settings $\delta < \dc$.
\end{observation}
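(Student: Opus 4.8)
The plan is to establish the inequality by decomposing both bounds into their constituent sources of delay and comparing them. First I would recall the definitions from Table~\ref{Notations}: $\dc$ bounds the controller-to-switch delay, which is the sum of (a) the network latency on the controller-to-switch path and (b) the internal switch processing time needed to actually complete the update (including any TCAM rearrangement). By contrast, $\delta$ bounds only the \emph{local} scheduling error of a switch that already holds the update in memory --- having received it during phase~0 of the timed protocols of Fig.~\ref{fig:TimedOrdUpdate} and~\ref{fig:TimedTwoPhaseUpdate} --- and is merely waiting for its clock to reach $T$. Consequently $\delta$ contains no network-latency term at all, so the comparison reduces to showing that the network-latency contribution to $\dc$ dominates the clock-synchronization and execution-jitter contributions to $\delta$.

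Next I would upper-bound $\delta$ using the two factors identified in Section~\ref{SchedAccSec}: clock-synchronization error and real-time execution jitter. PTP achieves sub-microsecond synchronization in commodity switches, and the evidence on real-time switch behavior --- sub-millisecond fast-reroute reactions, the $\pm 0.42$~ms OAM fault-detection window, and in particular the $\approx 1\,\mu s$ accuracy obtainable with \timef~\cite{Infocom-TimeFlip} --- shows that $\delta$ can be kept on the order of microseconds, and in the worst case a few milliseconds. I would also invoke the mitigations listed in that section (queuing untimed messages received near $T_s$, avoiding TCAM rearrangement near a scheduled update, etc.), which remove precisely the large TCAM-rearrangement spikes~\cite{jin2014dynamic,rotsos2012oflops} from the scheduled-update path.

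Then I would lower-bound $\dc$. Because each phase of an untimed \order\ or \twophase\ update must be guaranteed complete before the next is invoked, $\dc$ must in practice be taken as a high percentile (or a true upper bound) of the controller-to-switch delay; the production-network measurements of Section~\ref{DelayUpperSec} show tail values up to roughly $20\times$ the mean, and the mean itself is already in the millisecond range on realistic controller paths. On top of this, $\dc$ still includes the internal rule-installation delay, which inherits exactly the TCAM-rearrangement variability that the timed approach deliberately excludes from $\delta$. Hence $\dc$ exceeds $\delta$ by at least the tail network latency, yielding $\delta < \dc$ in any typical deployment.

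The main obstacle is that this is a statement about ``typical settings'' rather than a universal inequality, so the argument is necessarily an engineering/empirical one rather than a formal derivation; the delicate point is scoping it correctly. It can fail in degenerate configurations --- e.g.\ a controller collocated with the switch, so the network-latency term of $\dc$ vanishes, or a switch with no real-time support whose OS-scheduling stalls exceed the WAN tail. I would therefore make the supporting assumptions explicit (non-collocated controller, PTP-grade synchronization, and a switch applying at least the basic real-time mitigations of Section~\ref{SchedAccSec}) and present the conclusion as holding under those assumptions, which are the ones already in force throughout the paper.
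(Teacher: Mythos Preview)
Your proposal is correct and takes essentially the same approach as the paper: the paper justifies this observation (not with a formal proof, but with a one-sentence intuition) by noting that $\delta$ is affected only by the switch's performance, whereas $\dc$ is affected by both the switch's performance and the network latency. Your version is considerably more fleshed out---quantifying both sides, invoking the specific mitigations, and explicitly scoping the ``typical settings'' caveat---but the core decomposition and the reason the inequality holds are identical to the paper's.
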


The intuition behind Observation~\ref{deltaObs} is that $\delta$ is only affected by the switch's performance, whereas $\dc$ is affected by both the switch's performance and the network latency. We expect Observation~\ref{deltaObs} to hold even if switches are not designed for real-time performance. We argue that in switches that use some of the real-time techniques above, $\delta << \dc$, making the timed approach significantly more advantageous, as we shall see in the next section.

\section{Worst-case Analysis}
\label{WorstSec}
\subsection{Worst-case Update Duration}
We define the \emph{duration} of an update procedure to be the time elapsed from the instant at which the first switch updates its forwarding function to the instant at which the last switch completes its update.

\begin{sloppypar}
We use Program Evaluation and Review Technique (PERT) graphs~\cite{malcolm1959application} to illustrate the worst-case update duration analysis. Fig.~\ref{fig:PERTkStep} illustrates a PERT graph of an untimed ordered $k$-phase update, where three switches are updated in each phase. Switches $S_1$, $S_2$, and $S_3$ are updated in the first phase, $S_4$, $S_5$, and $S_6$ are updated in the second phase, and so on. In this procedure, the controller waits until phase $j$ is guaranteed to have been completed before starting phase $j+1$.
\end{sloppypar}

Each node in the PERT graph represents an event, and each edge represents an activity.
A node labeled $C_{j,i}$ represents the event `the controller starts transmitting a phase $j$ update message to switch $S_i$'. A node labeled $S_{j,i}$ represents `switch $S_i$ has completed its phase $j$ update'. The weight of each edge indicates the maximal delay to complete the transition from one event to another. $C_{start}$ and $C_{fin}$ represent the start and finish times of the update procedure, respectively.
The worst-case duration between two events is given by the longest path between the two corresponding nodes in the graph.

Throughout the section we focus on \emph{greedy} update procedures. An update procedure is said to be \emph{greedy} if the controller invokes each update message at the earliest possible time that guarantees that for every phase~$j$ all the singleton updates of phase $j$ are completed before those of phase $j+1$ are initiated. 

\begin{figure*}[htbp]

  \centering
  \fbox{\includegraphics[width=.83\textwidth]{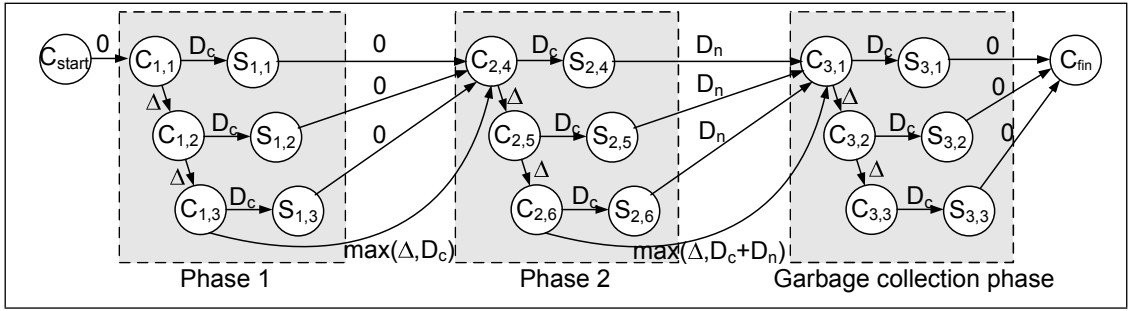}}
	\captionsetup{justification=centering}
  \caption{PERT graph of a two-phase update with garbage collection.}
  \label{fig:GPERT1}

\end{figure*}

\subsection{Worst-case Analysis of Untimed Updates}
\subsubsection{Untimed Updates}
We start by discussing untimed $k$-phase update procedures, focusing on a single phase, $j$, in which $N_j$ switches are updated. In Lemma~\ref{PhasejLemma} and in the upcoming lemmas in this section we focus on \emph{greedy} updates.

\vspace{-3mm}
\begin{lemma}
\label{PhasejLemma}
If $\mathbb{U}$ is a multi-phase update procedure, then the worst-case duration of phase $j$ of $\mathbb{U}$ is: 

\vspace{-7mm}
\begin{equation}
\label{eq:OneStep}
(N_j-1) \cdot \clat + \dc
\end{equation}
\end{lemma}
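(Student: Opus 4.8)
The plan is to reason directly on the PERT graph of Fig.~\ref{fig:PERTkStep}, restricted to phase $j$, and extract the longest path between the event marking the start of phase $j$ and the event marking its completion. Phase $j$ involves $N_j$ singleton updates, each triggered by a message the controller sends to one switch. By definition of an update procedure, the controller transmits these messages one at a time, and by the definition of $\clat$ (Table~\ref{Notations}) two consecutive transmissions are separated by at most $\clat$. So the controller finishes sending the last of the $N_j$ phase-$j$ messages at most $(N_j-1)\cdot\clat$ after it sent the first one. After the last message is transmitted, it takes at most $\dc$ (the controller-to-switch delay, including the in-switch processing until the update completes) for the corresponding switch to finish its update.

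The second ingredient is to argue that this particular path --- ``send message $1$, then wait through the $N_j-1$ inter-message gaps, then incur one full $\dc$ for the last message'' --- is in fact the \emph{longest} path, so that it gives the worst-case and not merely an upper bound. First I would note that in a greedy procedure the controller does not wait between messages of the same phase beyond what $\clat$ forces, so every phase-$j$ message $i$ is transmitted by time $(i-1)\cdot\clat$ after the phase started, and completes by time $(i-1)\cdot\clat+\dc$. The maximum of $(i-1)\cdot\clat+\dc$ over $1\le i\le N_j$ is attained at $i=N_j$, giving exactly~\eqref{eq:OneStep}. To see that the bound is tight (achieved in the worst case), I would exhibit the obvious adversarial schedule: the controller is forced to space messages exactly $\clat$ apart, the first $N_j-1$ updates complete instantaneously (delay $0$, permissible since the lower bound on the controller-to-switch delay is taken to be zero per Section~\ref{DelayLowerSec}), and the last update takes the full $\dc$. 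Then the first switch to update does so at time $0$ and the last completes at time $(N_j-1)\cdot\clat+\dc$, matching the claimed duration.

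One subtlety I would be careful about is the definition of ``duration of phase $j$'': it should be measured from the instant the first switch of phase $j$ completes (equivalently, here, from when phase $j$'s processing begins, since the first update can complete at time $0$) to the instant the last switch of phase $j$ completes. I should check that the starting reference does not shift the count: whether we anchor at $C_{start}$ of the phase or at the first switch's completion, the $(N_j-1)$ inter-message gaps plus one $\dc$ term survive, because the first message's own $\dc$ can be zero while the gaps and the last $\dc$ cannot be avoided. I would also want to confirm that no cross-phase edge interferes --- but since Lemma~\ref{PhasejLemma} speaks only of the duration \emph{of phase $j$} in isolation, and phase $j+1$ messages are not transmitted until phase $j$ is guaranteed complete, the relevant subgraph is exactly the $N_j$ parallel message-and-update branches hanging off the sequential chain of controller transmissions.

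The main obstacle, such as it is, is not a calculation but a modeling point: making the tightness argument rigorous requires pinning down that the $\clat$ bound is on \emph{transmission} times (so it applies regardless of how long each message then takes in flight) and that an adversary may simultaneously push one message to its full $\dc$ while letting the others complete in time $0$. Both follow from the notational conventions already laid down (Table~\ref{Notations} and the zero lower bounds of Section~\ref{DelayLowerSec}), so once those are invoked the argument is essentially a one-line longest-path computation in each direction (upper bound and matching instance).
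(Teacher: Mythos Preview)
Your proposal is correct and follows essentially the same approach as the paper: anchor at the transmission time of the first phase-$j$ message, use the zero lower bound on $\dc$ to place the first completion at that instant, then bound the last completion by $(N_j-1)\cdot\clat+\dc$. You are, if anything, more careful than the paper's own proof, which compresses the upper bound and the tightness argument into a single pass and does not separately verify that the last message's path dominates the others; your explicit adversarial schedule and the observation about anchoring the duration at the first completion are sound elaborations of the same idea rather than a different method.
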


\ifdefined\IncludeProofs
\begin{proof}
Assume that the controller transmits the first update message of phase $j$ at time $t$. Since there is no lower bound on the controller-to-switch delay, the earliest possible time at which the first switch completes its update is $t$. Since $N_j$ switches take part in phase $j$, and $\clat$ is the upper bound on the duration between two consecutive messages, the controller invokes the last update message of phase $j$ no later than at $t+(N_j-1) \cdot \clat$. Since $\dc$ is the upper bound on the controller-to-switch delay, the update is completed at most $\dc$ time units later. Hence, the worst-case update duration is $(N_j-1) \cdot \clat + \dc$.
\end{proof}
\fi

The following lemma specifies the worst-case update duration of a $k$-phase update. The intuition is straightforward from Fig.~\ref{fig:PERTkStep}.

\begin{sloppypar}
\begin{lemma} 
\label{KstepLemma}
The worst-case update duration of a \mbox{$k$-phase} update procedure is:

\vspace{-3mm}
\begin{equation}
\label{eq:KSteps}
\begin{split}
\sum \limits_{j=1}^{k} (N_j-1) \cdot \clat + (k-1) \cdot \max(\clat, \dc) + \dc 
\end{split}
\end{equation}
\end{lemma}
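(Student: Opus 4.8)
The plan is to build the longest path in the PERT graph of Fig.~\ref{fig:PERTkStep} explicitly, phase by phase, using Lemma~\ref{PhasejLemma} as the building block for the internal cost of a single phase. First I would set up notation for the greedy schedule: let $t^{(j)}_{\textrm{start}}$ be the real time at which the controller transmits the \emph{first} update message of phase $j$, and let $t^{(j)}_{\textrm{done}}$ be the worst-case time by which \emph{all} singleton updates of phase $j$ have completed. The duration of the whole procedure, by definition, is the worst-case value of $t^{(k)}_{\textrm{done}} - t^{(1)}_{\textrm{start}}$, since in the greedy procedure the first switch of phase~1 can complete as early as $t^{(1)}_{\textrm{start}}$ (zero lower bound on controller-to-switch delay), so the duration is measured from that instant.

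Next I would derive the recurrence linking consecutive phases. Within phase $j$, Lemma~\ref{PhasejLemma} gives that the last message of phase $j$ is sent by $t^{(j)}_{\textrm{start}} + (N_j-1)\clat$ and the phase completes by $t^{(j)}_{\textrm{done}} = t^{(j)}_{\textrm{start}} + (N_j-1)\clat + \dc$. For a greedy procedure the controller may begin transmitting phase $j+1$ only once phase $j$ is guaranteed complete, i.e.\ $t^{(j+1)}_{\textrm{start}} = t^{(j)}_{\textrm{done}}$ in the worst case --- \emph{except} that the controller also cannot send the first message of phase $j+1$ until $\clat$ after its previous transmission (the last message of phase $j$, sent at $t^{(j)}_{\textrm{start}} + (N_j-1)\clat$). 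Hence $t^{(j+1)}_{\textrm{start}} = t^{(j)}_{\textrm{start}} + (N_j-1)\clat + \max(\clat, \dc)$. This is exactly the place where the $\max(\clat,\dc)$ term in~\eqref{eq:KSteps} comes from, and I expect this to be the main subtlety of the proof: one must argue that the binding constraint between phases is the larger of ``wait for completion'' ($\dc$) and ``respect the inter-message spacing'' ($\clat$), and that in the PERT graph this corresponds to the longest of the two parallel paths from the start of phase $j$ into phase $j+1$.

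Then I would unroll the recurrence: summing $t^{(j+1)}_{\textrm{start}} - t^{(j)}_{\textrm{start}}$ for $j = 1, \ldots, k-1$ gives $t^{(k)}_{\textrm{start}} - t^{(1)}_{\textrm{start}} = \sum_{j=1}^{k-1}(N_j-1)\clat + (k-1)\max(\clat,\dc)$, and then adding the internal cost of the final phase $k$, namely $(N_k-1)\clat + \dc$ from Lemma~\ref{PhasejLemma}, yields $\sum_{j=1}^{k}(N_j-1)\clat + (k-1)\max(\clat,\dc) + \dc$, which is~\eqref{eq:KSteps}. To complete the argument I would verify both directions: (upper bound) every path in the PERT graph from $C_{\textrm{start}}$ to $C_{\textrm{fin}}$ is dominated by the one just traced, and (lower bound / tightness) there is an execution realizing all the worst-case delays simultaneously --- the first switch of each phase responding instantly so the phase's bottleneck is its last switch, and each controller-to-switch delay equal to $\dc$ --- so the bound is achieved. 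The routine part is the arithmetic of unrolling the telescoping sum; the conceptual part is justifying the $\max(\clat,\dc)$ handoff between phases, which I would make rigorous by a short case analysis on whether $\clat \le \dc$ or $\clat > \dc$.
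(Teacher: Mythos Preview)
Your proposal is correct and follows essentially the same approach as the paper: decompose phase by phase, charge $(N_j-1)\clat$ for the message bursts, $\max(\clat,\dc)$ for each of the $k-1$ inter-phase handoffs, and a final $\dc$ after the last message. The paper's proof is a terse three-sentence version of exactly this argument; your version is more explicit about the recurrence, the justification for the $\max$, and the tightness direction, all of which the paper leaves implicit.
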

\end{sloppypar}

\ifdefined\IncludeProofs
\begin{sloppypar}
\begin{proof}
Each phase $j$ delays the controller for ${(N_j-1) \cdot \clat}$. Since the update is greedy, at the end of each of the first $k-1$ phases the controller waits $\max(\clat, \dc)$ time units to guarantee that the phase has completed, and then immediately proceeds to the next phase. The update is completed, in the worst case, $\dc$ time units after the controller sends the last update message of the $k^{th}$ phase. The claim follows.
\end{proof}
\end{sloppypar}
\fi

Specifically, in \twophase\ updates $k=2$, yielding:

\begin{corollary}
If $\mathbb{U}$ is a \twophase\ update procedure, then its worst-case update duration is:

\vspace{-3mm}
\begin{equation}
\label{eq:TwoSteps}
(N_1+N_2-2) \cdot \clat + \max(\clat, \dc) + \dc
\end{equation}
\end{corollary}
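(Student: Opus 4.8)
The plan is to derive this corollary directly from Lemma~\ref{KstepLemma} by specializing to $k=2$. Since a \twophase\ update procedure is precisely a $k$-phase update procedure with $k=2$, the worst-case update duration is obtained by substituting $k=2$ into the expression~\eqref{eq:KSteps}. First I would expand the sum $\sum_{j=1}^{2}(N_j-1)\cdot\clat = (N_1-1)\cdot\clat + (N_2-1)\cdot\clat = (N_1+N_2-2)\cdot\clat$. Next I would evaluate the middle term $(k-1)\cdot\max(\clat,\dc) = 1\cdot\max(\clat,\dc) = \max(\clat,\dc)$. Adding the trailing $\dc$ term gives exactly $(N_1+N_2-2)\cdot\clat + \max(\clat,\dc) + \dc$, which is~\eqref{eq:TwoSteps}.

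There is essentially no obstacle here: the only thing to check is that the hypothesis ``$\mathbb{U}$ is a \twophase\ update procedure'' legitimately instantiates the hypothesis of Lemma~\ref{KstepLemma} (namely that $\mathbb{U}$ is a greedy $k$-phase update procedure), and that the two phases correspond to updating $N_1$ switches in the first phase and $N_2$ switches in the second, matching the $N_j$ notation of the lemma. Once that identification is made, the result is pure arithmetic substitution, so the proof is a one-line appeal to Lemma~\ref{KstepLemma} followed by simplification. If one wanted a self-contained argument instead, one could alternatively re-run the PERT-graph reasoning of Lemma~\ref{KstepLemma} for the two-phase case illustrated in Fig.~\ref{fig:PERTkStep}: the first phase delays the controller by $(N_1-1)\cdot\clat$, the controller then waits $\max(\clat,\dc)$ to be sure phase~1 has completed, the second phase delays it by a further $(N_2-1)\cdot\clat$, and the last singleton update completes at most $\dc$ later; summing these terms reproduces~\eqref{eq:TwoSteps}.
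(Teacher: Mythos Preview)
Your proposal is correct and matches the paper's own approach: the corollary is stated immediately after Lemma~\ref{KstepLemma} with the sentence ``Specifically, in \twophase\ updates $k=2$, yielding:'', so the intended proof is precisely the arithmetic substitution $k=2$ into Eq.~\eqref{eq:KSteps} that you carry out.
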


\subsubsection{Untimed Updates with Garbage Collection}
In some cases, garbage collection is required for some of the phases in the update procedure. For example, in the \twophase\ approach, after phase $2$ is completed and all en-route packets have been drained from the network, garbage collection is required for the $N_1$ switches of the first phase.  

More generally, assume that at the end of every phase~$j$ the controller performs garbage collection for a set of $\Ng_j$ switches. Thus, after phase $j$ is completed the controller waits $\dn$ time units for the en-route packets to drain, and then invokes the garbage collection procedure for the $\Ng_j$ switches. 

After invoking the last message of phase $j$, the controller waits for $\max(\clat, \dc+\dn)$ time units. Thus, the worst-case duration from the transmission of the last message of phase~$j$ until the garbage collection of phase~$j$ is completed is given by Eq.~\ref{eq:GBColl}.

\vspace{-3mm}
\begin{equation}
\label{eq:GBColl}
\max(\clat, \dc+\dn) + (\Ng_j-1) \cdot \clat + \dc 
\end{equation}

Fig.~\ref{fig:GPERT1} depicts a PERT graph of a \twophase\ update procedure that includes a garbage collection phase. At the end of the second phase, garbage collection is performed for the phase~1 policy rules of $S_1$, $S_2$, and $S_3$. This is in fact a special case of a $3$-phase update procedure, where the third phase takes place only after all the en-route packets are guaranteed to have been drained from the network. The main difference between this example and the general $k$-phase graph of Fig.~\ref{fig:PERTkStep} is that in Fig.~\ref{fig:GPERT1} the controller waits at least $max(\clat,\dc+\dn)$ time units from the transmission of the last message of phase~2 until starting to invoke the garbage collection phase.

\ifdefined\JournalVer
\else
\begin{figure*}[htbp]

	\centering
  \fbox{\includegraphics[width=.8\textwidth]{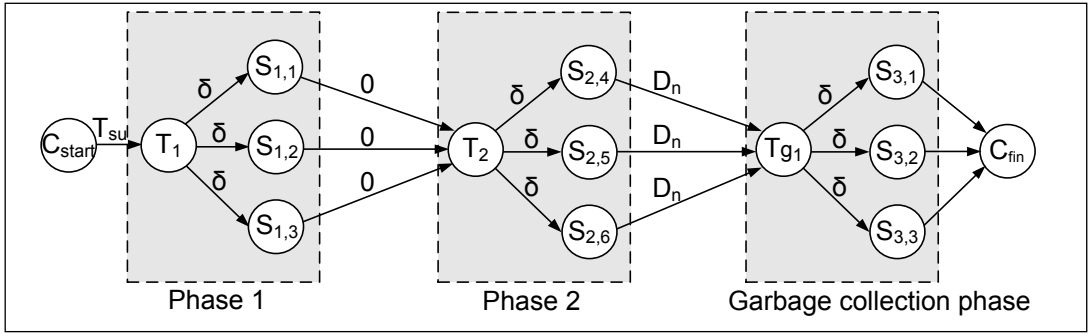}}
	\captionsetup{justification=centering}
  \caption{PERT graph of a timed two-phase update with garbage collection.}
  \label{fig:TGPERT1}

\end{figure*}
\fi

\begin{lemma}
\label{TwoPhaseWorstLemma}
If $\mathbb{U}$ is a \twophase\ update procedure with a garbage collection phase, then its worst-case update duration is: 

\vspace{-3mm}
\begin{equation}
\begin{split}
\label{eq:TwoPhaseWorst}
(N_1+N_2 +\Ng_1-3) \cdot \clat + \max(\clat, \dc) + \\
+ \max(\clat, \dc+\dn) + \dc 
\end{split}
\end{equation}
\end{lemma}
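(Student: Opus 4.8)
The plan is to treat this as a straightforward composition of the two pieces already established: the two-phase update proper (Lemma~\ref{KstepLemma} with $k=2$, i.e.\ the Corollary giving Eq.~\ref{eq:TwoSteps}) and the garbage-collection tail described in the text around Eq.~\ref{eq:GBColl}. Concretely, I would write the procedure as a $3$-phase greedy update in which phases~1 and~2 are the ordinary two-phase update and phase~3 is garbage collection of the $\Ng_1$ phase-1 switches, with the extra twist that the controller must wait long enough after phase~2 for en-route packets to drain. So the duration decomposes as: (time to push through phase~1) $+$ (wait after last phase-1 message until phase~1 is guaranteed complete) $+$ (time to push through phase~2) $+$ (wait after last phase-2 message until phase~2 is complete \emph{and} the network is drained) $+$ (time to push through the garbage-collection phase) $+$ (final controller-to-switch delay).

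The key steps, in order: (1) Invoke the greedy scheduling assumption to pin down each inter-message gap: by Lemma~\ref{PhasejLemma}'s reasoning, phase~$j$ with $N_j$ switches contributes $(N_j-1)\cdot\clat$ between its first and last message, so phases~1,~2 and the GC phase contribute $(N_1-1)\clat$, $(N_2-1)\clat$, and $(\Ng_1-1)\clat$ respectively, summing to $(N_1+N_2+\Ng_1-3)\cdot\clat$. (2) Account for the barrier after phase~1: the controller waits $\max(\clat,\dc)$ before the first phase-2 message, exactly as in the proof of Lemma~\ref{KstepLemma}. (3) Account for the barrier after phase~2: here the controller must wait until phase~2 is complete ($\dc$ after the last message) \emph{and} all en-route packets are drained (an additional $\dn$), but it may pipeline this against simply sending the next message after $\clat$; hence the wait is $\max(\clat,\dc+\dn)$, matching the term in Eq.~\ref{eq:GBColl}. (4) Add the final $\dc$ for the last garbage-collection message to take effect. (5) Sum the terms and check against Eq.~\ref{eq:TwoPhaseWorst}; also argue tightness by exhibiting a worst-case execution where every delay is realized on the critical path, which is read directly off the PERT graph in Fig.~\ref{fig:GPERT1} (resp.\ Fig.~\ref{fig:TGPERT1} in the non-journal version) as the longest path from $C_{start}$ to $C_{fin}$.

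The main obstacle, such as it is, is justifying the $\max(\clat,\dc+\dn)$ term rather than a naive $\dc+\dn$: one must argue that the $(\Ng_1-1)\clat$ worth of garbage-collection messages cannot be issued until both the controller-to-switch delay of the last phase-2 message has elapsed \emph{and} the drain time $\dn$ has passed, yet simultaneously that the controller does not idle beyond what the message spacing forces — i.e.\ that the critical path through the PERT graph genuinely takes the larger of the two parallel branches (the ``wait $\clat$ then send'' branch versus the ``wait $\dc+\dn$'' branch) exactly once. This is the same pipelining argument as in Lemma~\ref{KstepLemma}, just with the enlarged $\dc+\dn$ delay on the relevant edge, so once that observation is made the rest is bookkeeping. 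I would therefore present the proof as: identify the relevant longest path in Fig.~\ref{fig:GPERT1}, read off its edge weights using greediness and the delay bounds, and confirm the sum equals Eq.~\ref{eq:TwoPhaseWorst}.
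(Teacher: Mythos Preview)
Your proposal is correct and follows essentially the same decomposition as the paper's proof: sum the $(N_j-1)\clat$ message-spacing contributions across the three phases, add the $\max(\clat,\dc)$ barrier after phase~1, the $\max(\clat,\dc+\dn)$ barrier after phase~2, and the final $\dc$. One small slip: the relevant PERT graph is Fig.~\ref{fig:GPERT1} in both versions of the paper; Fig.~\ref{fig:TGPERT1} is the \emph{timed} variant and not the one you want here.
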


\ifdefined\IncludeProofs
\begin{sloppypar}
\begin{proof}
In each of the three phases the controller waits at most $\clat$ time units between two consecutive update messages, summing up to ${(N_1+N_2 +\Ng_1-3) \cdot \clat}$. The controller waits for $\max(\clat, \dc)$ time units at the end of phase~1, guaranteeing that all the updates of phase~1 have been completed before invoking phase~2. At the end of phase~2 the controller waits for $\max(\clat, \dc+\dn)$ time units, guaranteeing that phase~2 is completed, and that all the en-routed packets have been drained before starting the garbage collection phase. Finally, $\dc$ time units after the controller sends the last message of the garbage collection phase, the last update is guaranteed to be completed.
\end{proof}
\end{sloppypar}
\fi

\ifdefined\JournalVer
\begin{figure*}[htbp]

	\centering
  \fbox{\includegraphics[width=.8\textwidth]{TGPERT1}}
	\captionsetup{justification=centering}
  \caption{PERT graph of a timed two-phase update with garbage collection.}
  \label{fig:TGPERT1}

\end{figure*}
\fi

\subsection{Worst-case Analysis of Timed Updates}
\subsubsection{Worst-case-based Scheduling}
\begin{sloppypar}
Based on a worst-case analysis, an SDN program can determine an update schedule, $\mathbb{T}=(T_1, \ldots, T_k, {T_g}_1, \ldots, {T_g}_k)$.
Every timed update $u^t$ is performed no later than at $t+\delta$. Consequently, we can derive the worst-case scheduling constraints below.

\begin{definition}[Worst-case scheduling] If $\mathbb{U}$ is a timed $k$-phase update procedure, a schedule $\mathbb{T}=(T_1, \ldots, T_k, {T_g}_1, \ldots, {T_g}_k)$ is said to be a worst-case schedule if it satisfies the following two equations:

\vspace{-3mm}
\begin{equation}
\label{eq:Tsched}
T_j = T_{j-1} + \delta \ \ for \ every \ phase \ \ 2 \leq j \leq k
\end{equation}

\vspace{-3mm}
\begin{equation}
\begin{split}
\label{eq:Tgsched}
&{T_g}_j = T_j + \delta + \dn 
\end{split}
\end{equation}
\hspace{5mm} for every phase $j$ that requires garbage collection
\end{definition}
\end{sloppypar}

Note that a \emph{greedy timed} update procedure uses \emph{worst-case scheduling}.

Every schedule $\mathbb{T}$ that satisfies Eq.~\ref{eq:Tsched} and~\ref{eq:Tgsched} guarantees consistency.
For example, the timed \twophase\ update procedure of Fig.~\ref{fig:TGPERT1} satisfies the two scheduling constraints above.

\subsubsection{Timed Updates}
A timed update starts with the controller sending scheduled update messages to all the switches, requiring a setup time $T_{su}$.
Every phase is guaranteed to take no longer than $\delta$. An example of a timed \twophase\ update is illustrated in Fig.~\ref{fig:TGPERT1}. 

\begin{lemma}
\label{KdeltaLemma}
The worst-case update duration of a $k$-phase timed update procedure with a worst-case schedule is $k \cdot \delta$.
\end{lemma}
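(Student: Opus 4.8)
The plan is to unwind the definitions of "duration" and "worst-case schedule" and simply add up the per-phase bounds. Recall that the duration of an update procedure is the real-time interval from the instant the first switch updates its forwarding function to the instant the last switch completes its update. In a timed $k$-phase procedure with a worst-case schedule, phase $j$ is scheduled to clock time $T_j$, where by Eq.~\ref{eq:Tsched} we have $T_j = T_{j-1} + \delta$ for $2 \le j \le k$, so $T_k = T_1 + (k-1)\delta$.

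First I would pin down the earliest and latest real-time instants that matter. Since $\delta$ is an upper bound on the scheduling error, every singleton update scheduled to time $T$ is performed during $[T, T+\delta]$; in the best case it can occur as early as (real time corresponding to) $T$ itself. Hence the first switch can complete its phase-1 update as early as the instant corresponding to $T_1$, so this is the earliest possible start of the procedure. Second, I would bound the latest completion: the last switch to act is one performing a phase-$k$ update, scheduled to $T_k = T_1 + (k-1)\delta$, and by the scheduling-accuracy bound it completes no later than $T_k + \delta = T_1 + k\delta$. (Here I should note that garbage-collection phases, scheduled by Eq.~\ref{eq:Tgsched} to $T_{g_j} = T_j + \delta + \dn$, occur later than $T_k$, so if the procedure includes garbage collection the duration would be longer — this lemma is for the $k$-phase procedure proper, i.e.\ without a garbage-collection phase, which I would state explicitly to match Lemma~\ref{KdeltaLemma} as a counterpart to Lemma~\ref{KstepLemma} rather than Lemma~\ref{TwoPhaseWorstLemma}.) Subtracting, the worst-case duration is at most $(T_1 + k\delta) - T_1 = k\delta$, and this bound is tight because the extremal scenario — first update at $T_1$, last phase-$k$ update delayed the full $\delta$ past $T_k$ — is consistent with all the constraints.

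The only subtlety, and the step I would be most careful about, is the translation between clock time and real time: $T_1, \dots, T_k$ are clock values, whereas the "duration" is measured in real time. The cleanest route is to assume (as the model implicitly does, since $\delta$ already absorbs clock skew into the scheduling error) that switch clocks are synchronized, so that the real-time instant at which a switch's clock reads $T$ coincides with the nominal scheduled time, and the entire deviation — clock imperfection plus real-time-execution jitter — is folded into the $[T, T+\delta]$ window. Under that reading the computation above goes through verbatim. I would also remark that the result is independent of $N_1, \dots, N_k$ and of $\clat$: because the controller front-loads all update messages in the setup phase (phase~0, taking $T_{su}$, which is not counted in the duration), the number of switches per phase no longer stretches the update window — this is precisely the scalability advantage over the untimed bound of Lemma~\ref{KstepLemma}.
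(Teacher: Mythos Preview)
Your proof is correct and follows essentially the same approach as the paper, whose entire proof is the one-liner ``the lemma follows directly from the worst-case scheduling constraints of Eq.~\ref{eq:Tsched} and~\ref{eq:Tgsched}.'' You have simply fleshed out that argument --- computing $T_k = T_1 + (k-1)\delta$, bounding the earliest start at $T_1$ and the latest finish at $T_k + \delta$, and noting that garbage collection falls outside this lemma's scope --- which is exactly what the paper leaves implicit.
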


\ifdefined\IncludeProofs
\begin{proof}
The lemma follows directly from the worst-case scheduling constraints of Eq.~\ref{eq:Tsched} and~\ref{eq:Tgsched}.
\end{proof}
\fi

Based on the latter, we derive the following lemma.

\begin{lemma}
\label{Dn3deltaLemma}
If $\mathbb{U}$ is a \twophase\ timed update procedure with a garbage collection phase using a worst-case schedule, then its worst-case update duration is~${\dn + 3 \cdot \delta}$.
\end{lemma}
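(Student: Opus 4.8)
The plan is to read a \twophase\ timed update procedure with garbage collection as, in effect, a three-phase schedule and then compute the span between its extreme events. First I would fix the worst-case schedule: by Eq.~\ref{eq:Tsched} the second phase is scheduled at $T_2 = T_1 + \delta$, and by Eq.~\ref{eq:Tgsched} --- applied with the garbage collection following the second phase, exactly as in the untimed analysis behind Lemma~\ref{TwoPhaseWorstLemma} --- the garbage collection is scheduled at ${T_g} = T_2 + \delta + \dn = T_1 + 2\delta + \dn$. I would also note that the per-phase serialization terms $(N_j-1)\clat$ that appear in the untimed lemmas disappear here, since timed updates within a single phase are not serialized by the controller.

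Next I would identify the two endpoints that define the update duration. By definition the duration runs from the instant the first switch modifies its forwarding function to the instant the last switch completes its update. Since a timed update scheduled for $T$ is executed somewhere in $[T, T+\delta]$, the earliest that any phase-1 update can occur is $T_1$, and the latest that the final garbage-collection update can complete is ${T_g} + \delta$. Because the scheduling error is bounded independently at each switch, there is a run realizing both extremes at once --- the first phase-1 update at exactly $T_1$ and the last garbage-collection update at exactly ${T_g} + \delta$ --- so the worst-case duration is precisely ${T_g} + \delta - T_1 = (T_1 + 2\delta + \dn) + \delta - T_1 = 3\delta + \dn$. Equivalently, Lemma~\ref{KdeltaLemma} with $k=2$ already accounts for phases~1 and~2 (a span of $2\delta$), and the garbage-collection phase contributes the remaining $\dn + \delta$: the $\dn$ drain time plus one more scheduling error.

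I expect no genuinely hard step; the proof is bookkeeping on top of the worst-case scheduling constraints. The one place to be careful is the offset of the garbage-collection event: it is $\delta + \dn$ after $T_2$, not after $T_1$, because old-tagged packets can still be created until phase~2 has completed at every ingress switch, and only then do they need $\dn$ to drain --- placing that offset relative to $T_1$ would wrongly shrink the coefficient of $\delta$ from $3$ to $2$. A secondary point worth one sentence is the simultaneity argument above, which is what makes the bound tight rather than merely an upper bound.
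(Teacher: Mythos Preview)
Your proposal is correct and matches the paper's argument: the paper's proof is essentially your ``equivalently'' sentence --- invoke Lemma~\ref{KdeltaLemma} for the first two phases ($2\delta$) and add $\dn+\delta$ for the garbage-collection phase. Your explicit computation of the schedule points and the tightness remark are additional detail the paper omits, but the underlying decomposition is the same.
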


\ifdefined\IncludeProofs
\begin{proof}
By Lemma~\ref{KdeltaLemma}, the first two phases take ${2 \cdot \delta}$ time units. The garbage collection phase requires $\delta$ additional time units, and also $\dn$ time units to allow all en-route packets to drain from the network. Thus, the update duration is $\dn + 3 \cdot \delta$.
\end{proof}
\fi

\subsection{Timed vs. Untimed Updates}
\label{TimedVsUntimedSec}
We now study the conditions under which the timed approach outperforms the untimed approach.

Based on Lemmas~\ref{KstepLemma} and~\ref{KdeltaLemma}, we observe that a timed $k$-phase update procedure has a shorter update duration than a similar untimed $k$-phase update procedure if:

\vspace{-3mm}
\begin{equation}
\label{eq:TimedvsUntimed}
k \cdot \delta < \sum \limits_{j=1}^{k} (N_j-1) \cdot \clat + (k-1) \cdot \max(\clat, \dc) + \dc
\end{equation}

\begin{lemma}
Let $\mathbb{U}^T$ be a greedy timed $k$-phase update procedure, with a worst-case update duration $D_1$. Let $\mathbb{U}$ be a greedy untimed $k$-phase update procedure with a worst-case update duration $D_2$. If $\delta < \dc$ and $\mathbb{U}^T \sim \mathbb{U}$, then $D_1<D_2$.
\end{lemma}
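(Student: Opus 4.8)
The plan is to reduce everything to the closed-form expressions already derived for the two worst-case durations and then compare them term by term. By Lemma~\ref{KdeltaLemma}, since $\mathbb{U}^T$ is greedy it uses a worst-case schedule, so $D_1 = k \cdot \delta$. By Lemma~\ref{KstepLemma}, the untimed procedure $\mathbb{U}$ has worst-case duration $D_2 = \sum_{j=1}^{k} (N_j-1)\cdot\clat + (k-1)\cdot\max(\clat,\dc) + \dc$. The first step is to observe that $\mathbb{U}^T \sim \mathbb{U}$ guarantees both procedures have the same number of phases $k$ (and indeed the same per-phase switch counts $N_j$), so these two formulas are being compared over the same $k$; strictly speaking we only need that the number of phases agrees.

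Next I would bound $D_2$ from below. Since $\mathbb{U}$ is a $k$-phase update procedure, every phase $j$ contains at least one singleton update, so $N_j \geq 1$ and hence $\sum_{j=1}^{k}(N_j-1)\cdot\clat \geq 0$. Also $\max(\clat,\dc) \geq \dc$. Combining these, $D_2 \geq (k-1)\cdot\dc + \dc = k\cdot\dc$.

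Finally, invoke the hypothesis $\delta < \dc$ to conclude $D_1 = k\cdot\delta < k\cdot\dc \leq D_2$, which is the claim.

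There is no real obstacle here: the content is entirely carried by Lemmas~\ref{KstepLemma} and~\ref{KdeltaLemma}, and the only points needing a word of care are that $\mathbb{U}^T\sim\mathbb{U}$ is what lets us speak of a common $k$, that $N_j\geq 1$ makes the summation term non-negative, and that $\max(\clat,\dc)\geq\dc$ — all immediate. If anything, the one thing worth stating explicitly is why greedy timed updates use worst-case scheduling, which is exactly the remark made right after the definition of worst-case scheduling, so it can simply be cited.
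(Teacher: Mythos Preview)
Your proposal is correct and follows essentially the same route as the paper: invoke Lemmas~\ref{KdeltaLemma} and~\ref{KstepLemma} for $D_1$ and $D_2$, then bound $D_2 \ge k\cdot\dc$ via $N_j\ge 1$ and $\max(\clat,\dc)\ge\dc$, and finish with $\delta<\dc$. If anything, your use of $\le$ rather than strict inequality in the intermediate bounds is more careful than the paper's write-up, since equality can occur when each $N_j=1$ or when $\clat\le\dc$.
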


\ifdefined\IncludeProofs
\begin{proof}
By Lemma~\ref{KdeltaLemma}, we have $D_1 = k \cdot \delta$. Lemma~\ref{KstepLemma} yields $D_2=\sum \limits_{j=1}^{k} (N_j-1) \cdot \clat + (k-1) \cdot \max(\clat, \dc) + \dc$. Thus,
$D_1 = k \cdot \delta < k \cdot \dc < (k-1) \cdot \max(\clat, \dc) + \dc$
$< \sum \limits_{j=1}^{k} (N_j-1) \cdot \clat + (k-1) \cdot \max(\clat, \dc) + \dc = D_2$.
It follows that $D_1<D_2$.
\end{proof}
\fi

Now, based on Lemma~\ref{TwoPhaseWorstLemma} and Lemma~\ref{Dn3deltaLemma}, we observe that a timed \twophase\ update procedure with garbage collection has a shorter update duration than a similar untimed \twophase\ update procedure if:

\vspace{-3mm}
\begin{equation}
\begin{split}
\label{eq:TimedvsUntimedTPGC}
\dn + 3 \cdot \delta < (N_1+N_2+\Ng_1-3) \cdot \clat  +\\
+ \max(\clat, \dc) + \max(\clat, \dc+\dn) +  \dc 
\end{split}
\end{equation}

\begin{lemma}
Let $\mathbb{U}^T$ be a greedy timed \twophase\ update procedure with a garbage collection phase, with a worst-case update duration $D_1$. Let $\mathbb{U}$ be a greedy untimed \twophase\ update procedure with a worst-case update duration $D_2$. If $\delta < \dc$ and $\mathbb{U}^T \sim \mathbb{U}$, then $D_1<D_2$.
\end{lemma}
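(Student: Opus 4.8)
The plan is to combine the two closed-form durations already established and compare them term by term. By Lemma~\ref{Dn3deltaLemma}, the greedy timed procedure $\mathbb{U}^T$ satisfies $D_1 = \dn + 3\delta$, and by Lemma~\ref{TwoPhaseWorstLemma} the greedy untimed procedure $\mathbb{U}$ satisfies $D_2 = (N_1 + N_2 + \Ng_1 - 3)\cdot\clat + \max(\clat,\dc) + \max(\clat,\dc+\dn) + \dc$. Since $\mathbb{U}^T \sim \mathbb{U}$, the two procedures have the same number of singleton updates in each phase (and the same garbage-collection set), so the parameters $N_1$, $N_2$, and $\Ng_1$ appearing in these two formulas coincide; hence it suffices to show $\dn + 3\delta < D_2$.

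Next I would bound each summand of $D_2$ from below. Because every phase of a \twophase\ update procedure, including the garbage-collection phase, contains at least one singleton update, we have $N_1, N_2, \Ng_1 \geq 1$, so the leading term $(N_1 + N_2 + \Ng_1 - 3)\cdot\clat$ is nonnegative. For the remaining three terms I would invoke the hypothesis $\delta < \dc$ together with $\dn \geq 0$: namely $\max(\clat,\dc) \geq \dc > \delta$, then $\max(\clat,\dc+\dn) \geq \dc + \dn > \delta + \dn$, and finally $\dc > \delta$. Summing these three inequalities gives $\max(\clat,\dc) + \max(\clat,\dc+\dn) + \dc > 3\delta + \dn$, and adding the nonnegative leading term yields $D_2 > 3\delta + \dn = D_1$, as required.

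There is essentially no hard step here: once the two worst-case durations are in hand, the argument is a short chain of elementary inequalities driven entirely by $\delta < \dc$ (with the trivial fact $\dn \geq 0$ selecting $\dc+\dn$ as the branch of the second maximum that we actually use). The only point deserving a line of care is the appeal to $\mathbb{U}^T \sim \mathbb{U}$, which guarantees that the untimed bound of Lemma~\ref{TwoPhaseWorstLemma} is evaluated at the same $N_1, N_2, \Ng_1$ as the timed bound of Lemma~\ref{Dn3deltaLemma}; after that the comparison closes immediately.
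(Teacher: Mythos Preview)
Your proposal is correct and follows essentially the same approach as the paper: invoke Lemmas~\ref{Dn3deltaLemma} and~\ref{TwoPhaseWorstLemma} for the two closed forms, drop the nonnegative $(N_1+N_2+\Ng_1-3)\cdot\clat$ term, and use $\delta<\dc$ together with the obvious lower bounds on the two maxima. Your write-up is in fact slightly more careful than the paper's, since you make explicit why the leading term is nonnegative and where the strict inequality actually comes from.
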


\ifdefined\IncludeProofs
\begin{proof}
By Lemma~\ref{Dn3deltaLemma} we have $D_1 = \dn + 3 \cdot \delta$, and by Lemma~\ref{TwoPhaseWorstLemma} we have $D_2=(N_1+N_2+\Ng_1-3) \cdot \clat + \max(\clat, \dc) + \max(\clat, \dc+\dn) +  \dc$.

Thus, $D_1 = \dn + 3 \cdot \delta < \dn + 3 \cdot \dc < (N_1+N_2+\Ng_1-3) \cdot \clat  + \dn + 3 \cdot \dc \leq (N_1+N_2+\Ng_1-3) \cdot \clat  + \max(\clat, \dc) + \max(\clat, \dc+\dn) +  \dc = D_2$. It follows that $D_1 < D_2$, as claimed.
\end{proof}
\fi

We have shown that if $\delta < \dc$ the timed approach yields a shorter update duration than the untimed approach, and is thus more scalable. Based on Observation~\ref{deltaObs}, even if switches are not designed for real-time performance we have $\delta<\dc$. We conclude that \textbf{the timed approach is the superior one in typical settings}.

\section{Time as a Consistency Knob}
\label{ConsKnobSec}
\subsection{An Inconsistency Metric}
As discussed in Section~\ref{BoundSec}, the upper bounds $\dc$ and $\dn$ do not necessarily exist, or may be very high. Thus, in practice consistent network updates only guarantee consistent forwarding with a high probability, raising the need for a way to measure and \emph{quantify} to what extent an update is consistent.

\begin{definition}[Test flow]
A set of packet instances $\PktI$ is said to be a \emph{test flow} if for every two packet instances $(pk_1,S_1,p_1,t_1) \in \PktI$ and $(pk_2,S_2,p_2,t_2) \in \PktI$, all the following conditions are satisfied:
\begin{itemize}
	\item $S_1=S_2$.
	\item $p_1=p_2$.
	\item $pk_1=pk_2$.\footnote{For simplicity, we define that all packets of a test flow are identical. It is in fact sufficient to require that all packets of the flow are indistinguishable by the switch forwarding functions, for example, that all packets of a flow have the same source and destination addresses.}
	\item Packet instances are received at a constant packet arrival rate $R$, i.e., if both $t_2>t_1$ and there is no packet instance $(pk_3,S_3,p_3,t_3) \in \PktI$ such that $t_2> t_3>t_1$, then $t_2 = t_1 + 1/R$.
\end{itemize}
\end{definition}

We assume a method that, for a given test flow $f$ and an update $u$, allows to measure the number of packets $n(f,u)$ that are forwarded inconsistently.\footnote{This measurement can be performed, for example, by per-flow match counters in the switches.}

\begin{definition}[Inconsistency metric]
Let $f$ be a test flow with a packet arrival rate $R(f)$. Let $U$ be an update, and let $n(f,U)$ be the number of packet instances of $f$ that are forwarded inconsistently due to update $U$. The inconsistency $I(f,U)$ of a flow $f$ with respect to $U$ is defined to be: 
\begin{equation}
\label{InconEq}
I(f,U) = \frac{n(f,U)}{R(f)}
\end{equation}
\end{definition}

The inconsistency $I(f,U)$ is measured in time units. Intuitively, $I(f,U)$ quantifies the amount of time that flow $f$ is disrupted by the update.

\subsection{Fine Tuning Consistency}
\begin{sloppypar}
Timed updates provide a powerful mechanism that allows SDN programmers to tune the degree of consistency. 
By \textbf{setting} the update times $T_1, T_2, \ldots, T_k, {T_g}_1, \ldots, {T_g}_k$, the controller can play with the consistency-scalability tradeoff; the update overhead can be reduced at the expense of some inconsistency, or vice versa.\footnote{In some scenarios, such as security policy updates, even a small level of inconsistency cannot be tolerated. In other cases, such as path updates, a brief period of inconsistency comes at the cost of some packets being dropped, which can be a small price to pay for reducing the update duration.}
\end{sloppypar}

\begin{example}
We consider a \twophase\ update with a garbage collection phase. 
We assume that $\delta=0$ and that all packet instances are subject to a constant network delay, $\dn$.
By assigning $T = T_1 = T_2 = {T_g}_1$, the controller schedules a simultaneous update. 
This approach is referred to as \timec\ in ~\cite{hotsdn}.
All switches are scheduled to perform the update at the same time~$T$. Packets entering the network during the period $[T-\dn, T]$ are forwarded inconsistently. The inconsistency metric in this example is $I = \dn$.
The advantage of this approach is that it completely relieves the switches from the overhead of maintaining duplicate entries between the phases of the update procedure.
\end{example}

\begin{sloppypar}
\begin{example}
\label{KnobExample}
Again, we consider a \twophase\ update (Fig.~\ref{fig:PERTTune}), with $\delta=0$ and a constant network delay, $\dn$.
We assign $T_2 = T_1 + \delta$ according to Eq.~\ref{eq:Tsched}, and ${T_g}_1$ is assigned to be $T_2+\delta+d$, where $d<\dn$.
The update is illustrated in the PERT graph of Fig.~\ref{fig:PERTTune}.
Hence, packets entering the network during the period $[T_2-\dn+d, T_2]$ are forwarded inconsistently. The inconsistency metric is equal to $I = min(\dn - d$,0).
In a precise sense, the delay $d$ is a knob for tuning the update inconsistency.
\end{example}
\end{sloppypar}

\begin{figure}[htbp]

	\centering
	\hspace{-3mm}
  \fbox{\includegraphics[width=.46\textwidth]{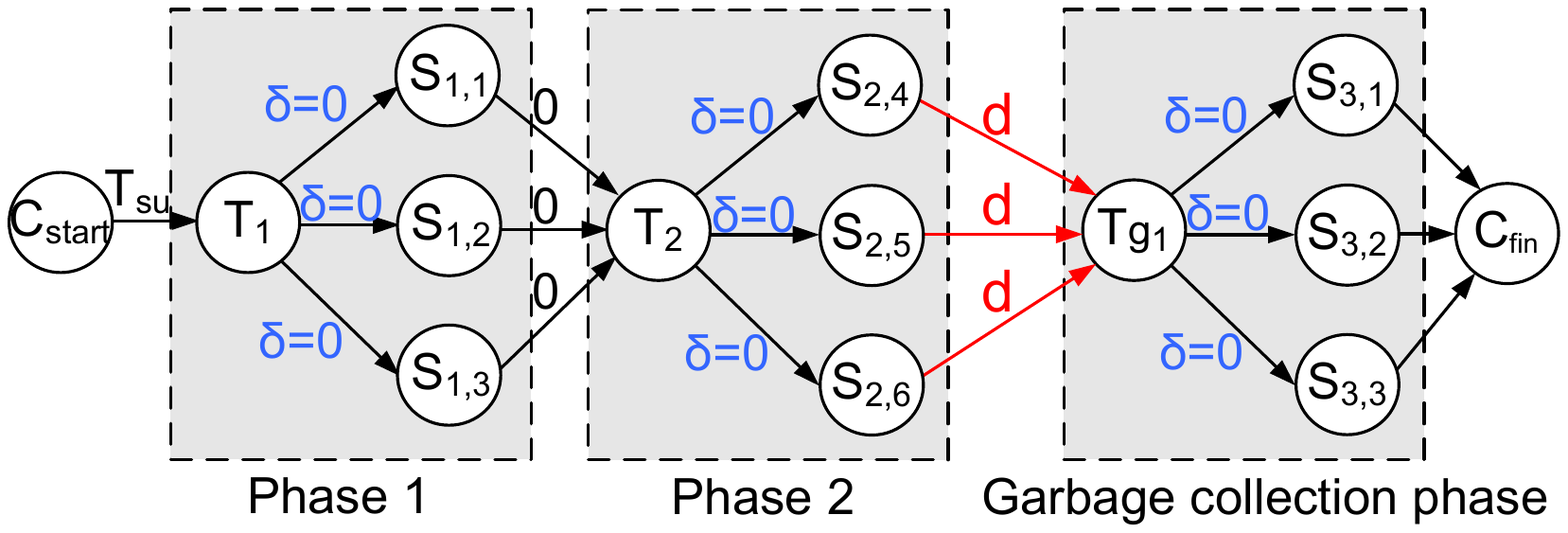}}
	\captionsetup{justification=raggedright}
  \caption{Example~\ref{KnobExample}: PERT graph of a timed two-phase update. The delay~$d$ (red in the figure) is a knob for consistency.}
  \label{fig:PERTTune}

\end{figure}

\ifdefined\JournalVer
\begin{figure*}[htbp]

	\centering
  \begin{subfigure}[t]{.25\textwidth}
  \centering
  \fbox{\includegraphics[height=5.0\grafflecm]{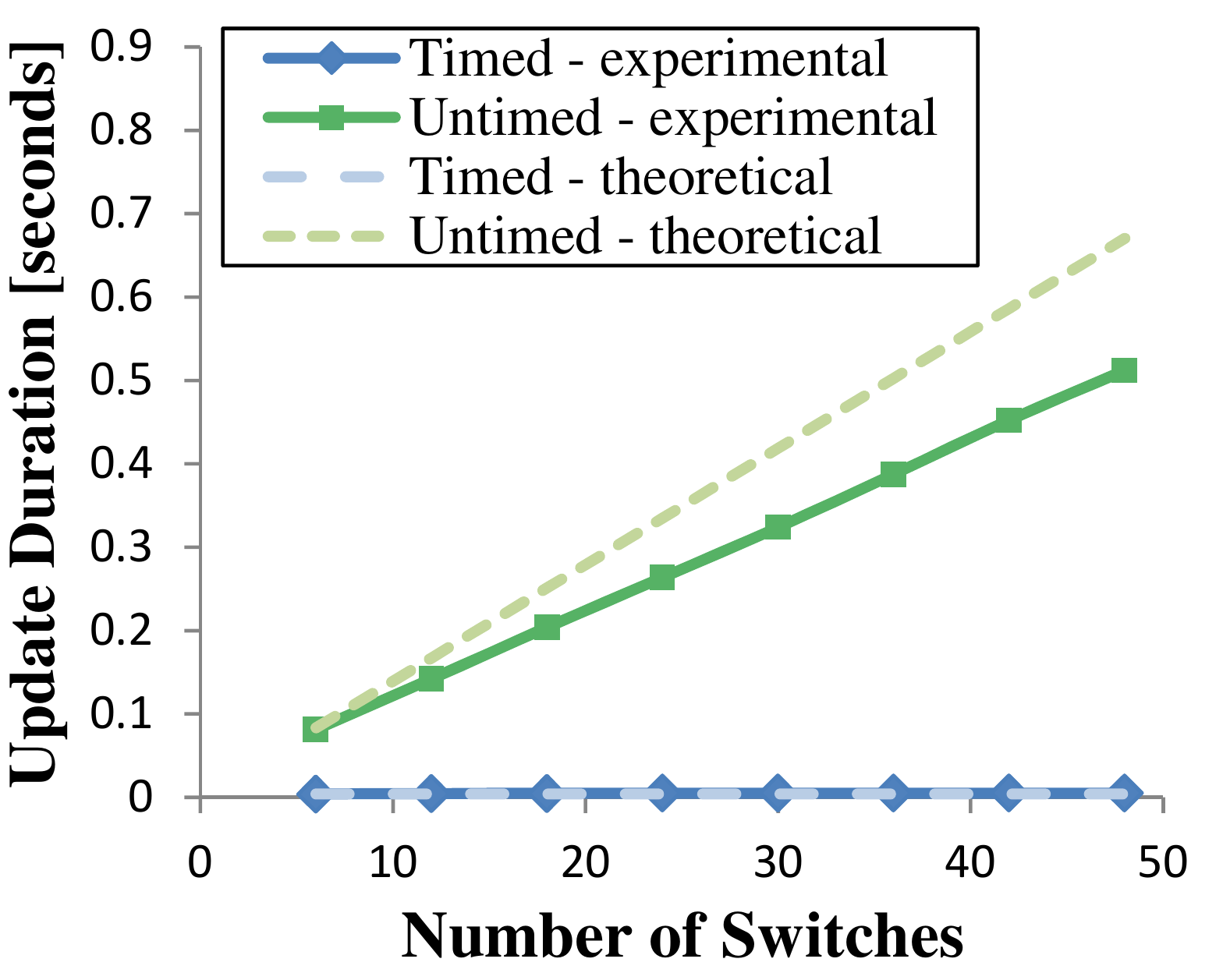}}
	\captionsetup{justification=centering}
  \caption{The update duration as a function of the number of switches.}
  \label{fig:NumOfSw}
  \end{subfigure}%
  \begin{subfigure}[t]{.25\textwidth}
  \centering
  \fbox{\includegraphics[height=5.0\grafflecm]{DurationVs_del}}
	\captionsetup{justification=centering}
  \caption{The update duration as a function of the \\ scheduling error, for $N=12$.}
  \label{fig:DurationVs_del}
  \end{subfigure}%
  \begin{subfigure}[t]{.25\textwidth}
  \centering
  \fbox{\includegraphics[height=5.0\grafflecm]{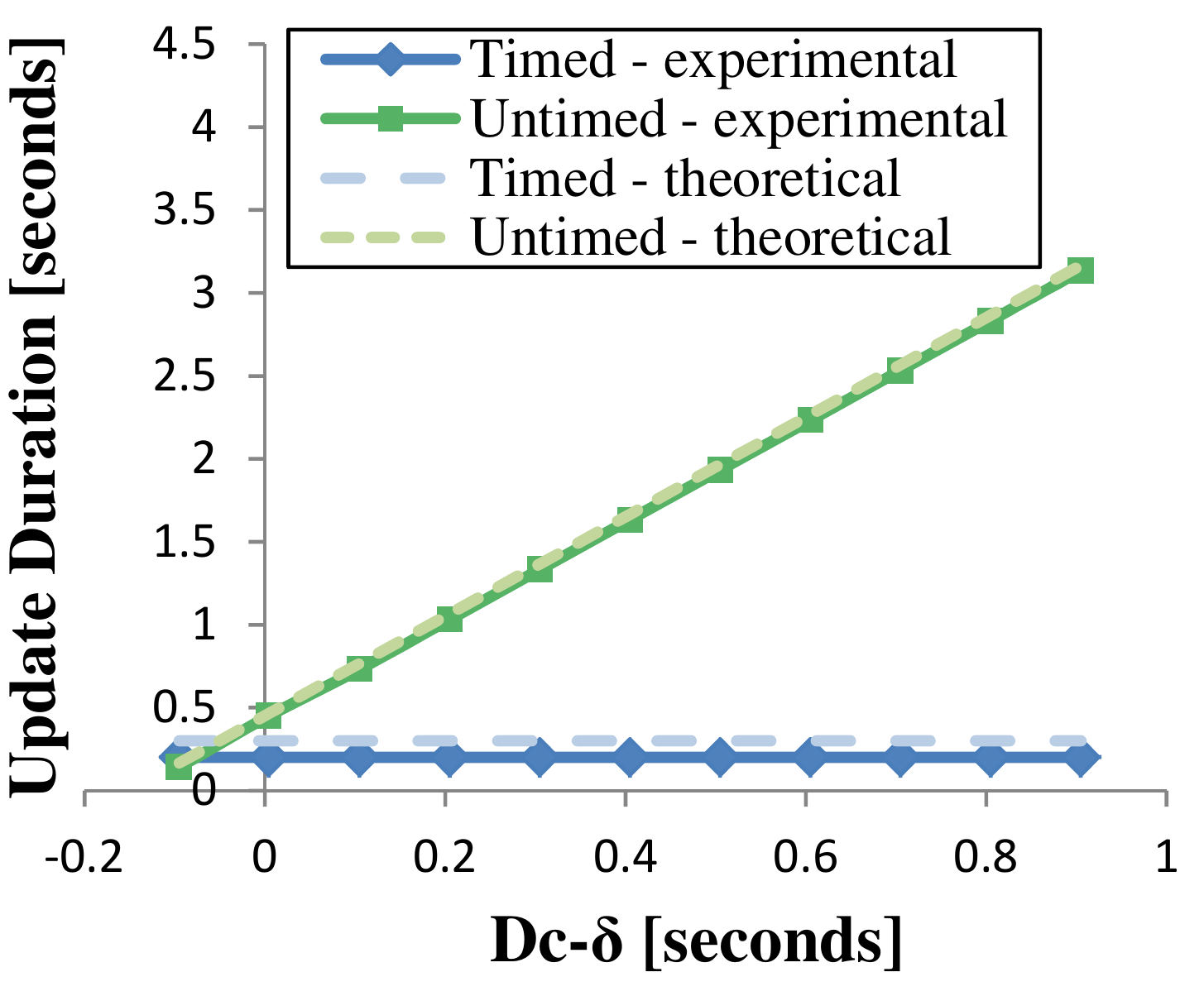}}
	\captionsetup{justification=centering}
  \caption{The update duration as a function of $\dc-\delta$, for $N=12$, $\delta=100$ ms, various values of $\dc$.}
  \label{fig:Dcdelta}
  \end{subfigure}%
  \begin{subfigure}[t]{.25\textwidth}
  \centering
  \fbox{\includegraphics[height=5.0\grafflecm]{DurationVsDn}}
	\captionsetup{justification=centering}
  \caption{The update duration as a function of the end-to-end \\ network delay $\dn$, for $N=12$.}
  \label{fig:DurationVsDn}
  \end{subfigure}%

  \caption{Timed updates vs. untimed updates. Each figure shows the experimental values, and the theoretical worst-case values, based on Lemmas~\ref{TwoPhaseWorstLemma} and ~\ref{Dn3deltaLemma}.}
  \label{fig:TimeVsUntimeExt}
\end{figure*}
\fi

\begin{figure*}[htbp]
	\centering
  \begin{subfigure}[t]{.33\textwidth}
  \centering
  \fbox{\includegraphics[height=5\grafflecm]{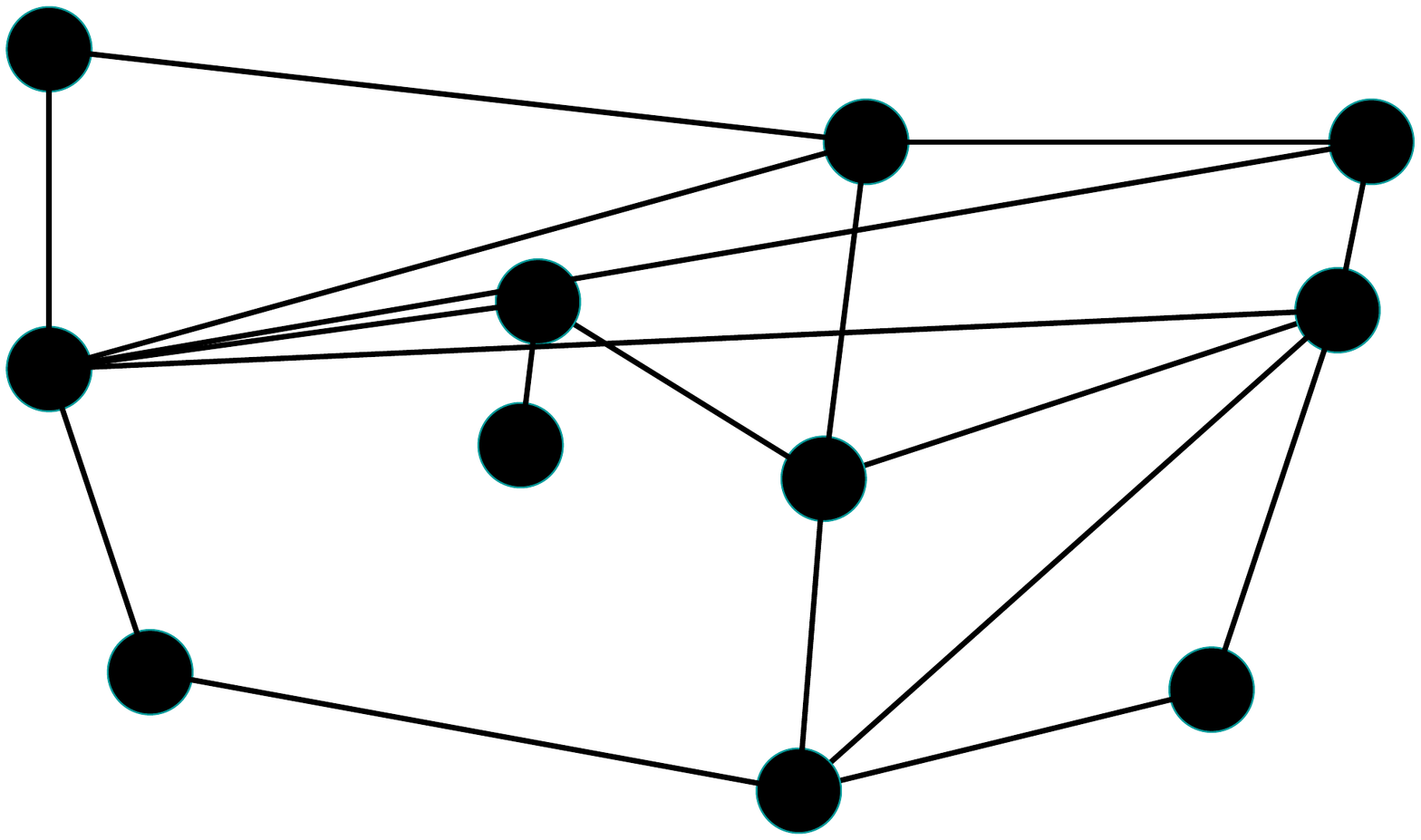}}
	\captionsetup{justification=centering}
  \caption{Sprint topology.}
  \label{fig:Sprint}
  \end{subfigure}%
  \begin{subfigure}[t]{.33\textwidth}
  \centering
  \fbox{\includegraphics[height=5\grafflecm]{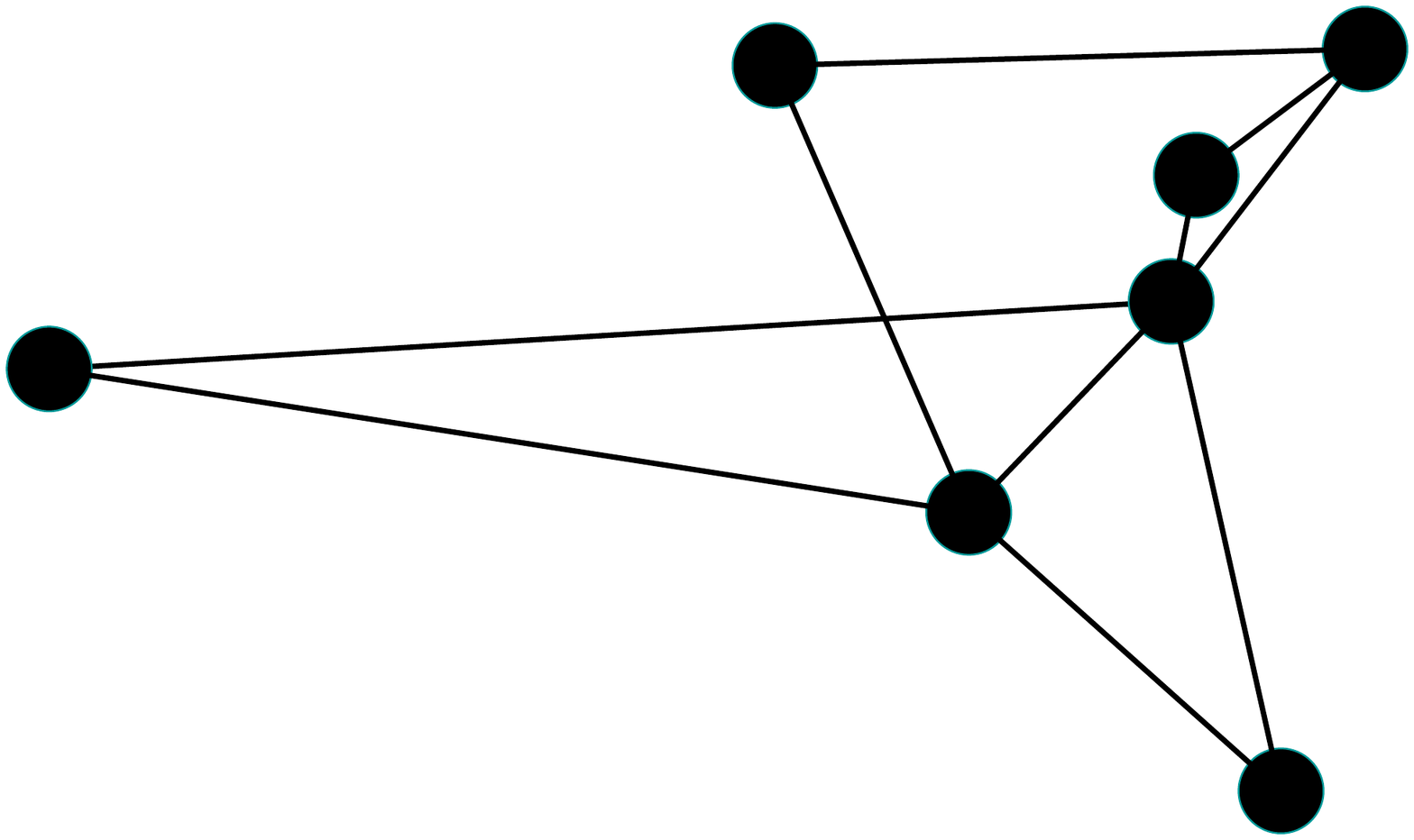}}
	\captionsetup{justification=centering}
  \caption{NetRail topology.}
  \label{fig:NetRail}
  \end{subfigure}%
  \begin{subfigure}[t]{.33\textwidth}
  \centering
  \fbox{\includegraphics[height=5\grafflecm]{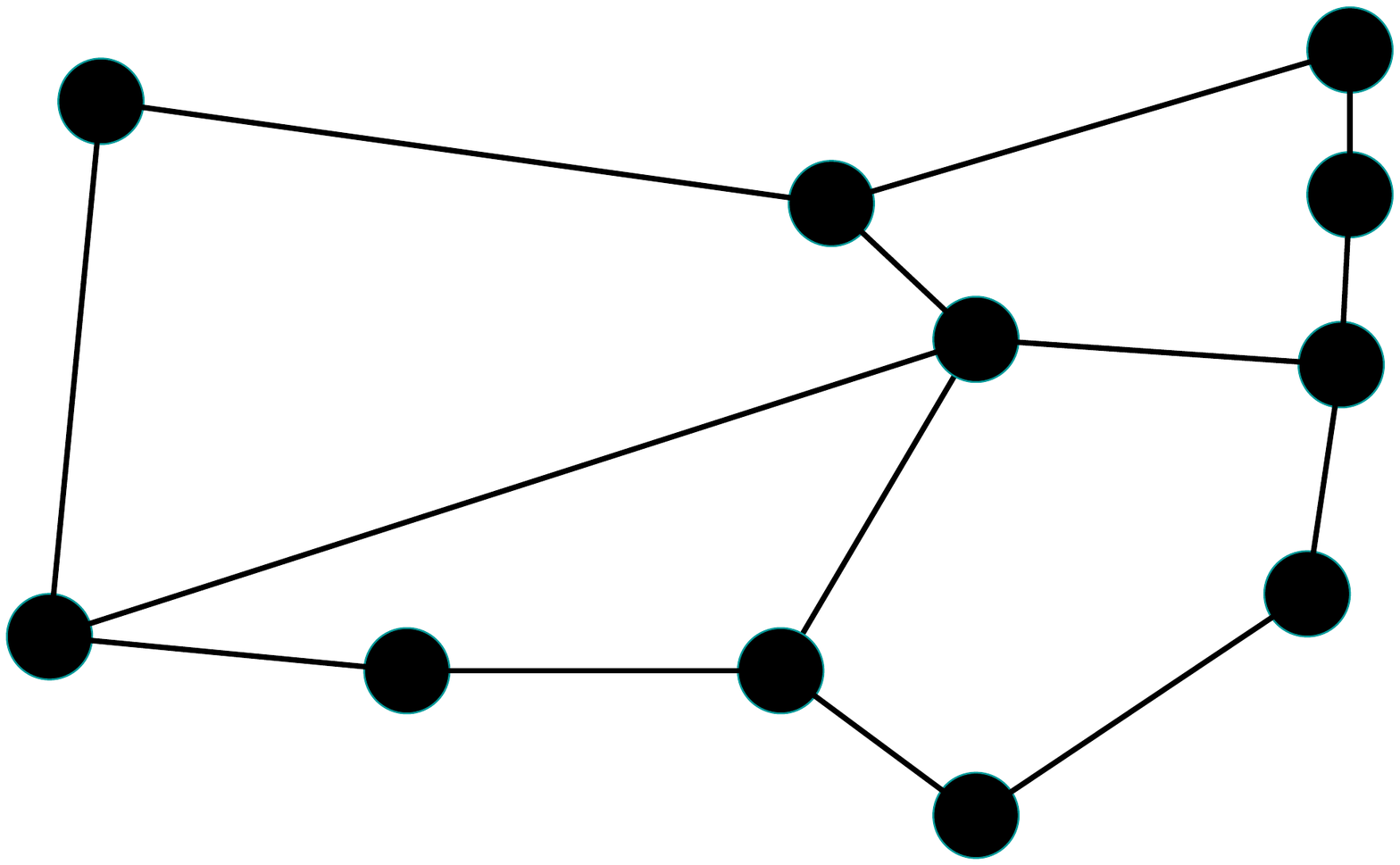}}
	\captionsetup{justification=raggedright}
  \caption{Compuserve topology.}
  \label{fig:Compuserve}
  \end{subfigure}%
  \caption{Publicly available network topologies~\cite{Zoo} used in our experiments. Each node in the graph represents an OpenFlow switch.}
  \label{fig:topologies}
\end{figure*}

\ifdefined\JournalVer
\else
\begin{figure*}[htbp]
  \centering
  \begin{subfigure}[t]{.33\textwidth}
  \centering
  \fbox{\includegraphics[height=4.5\grafflecm]{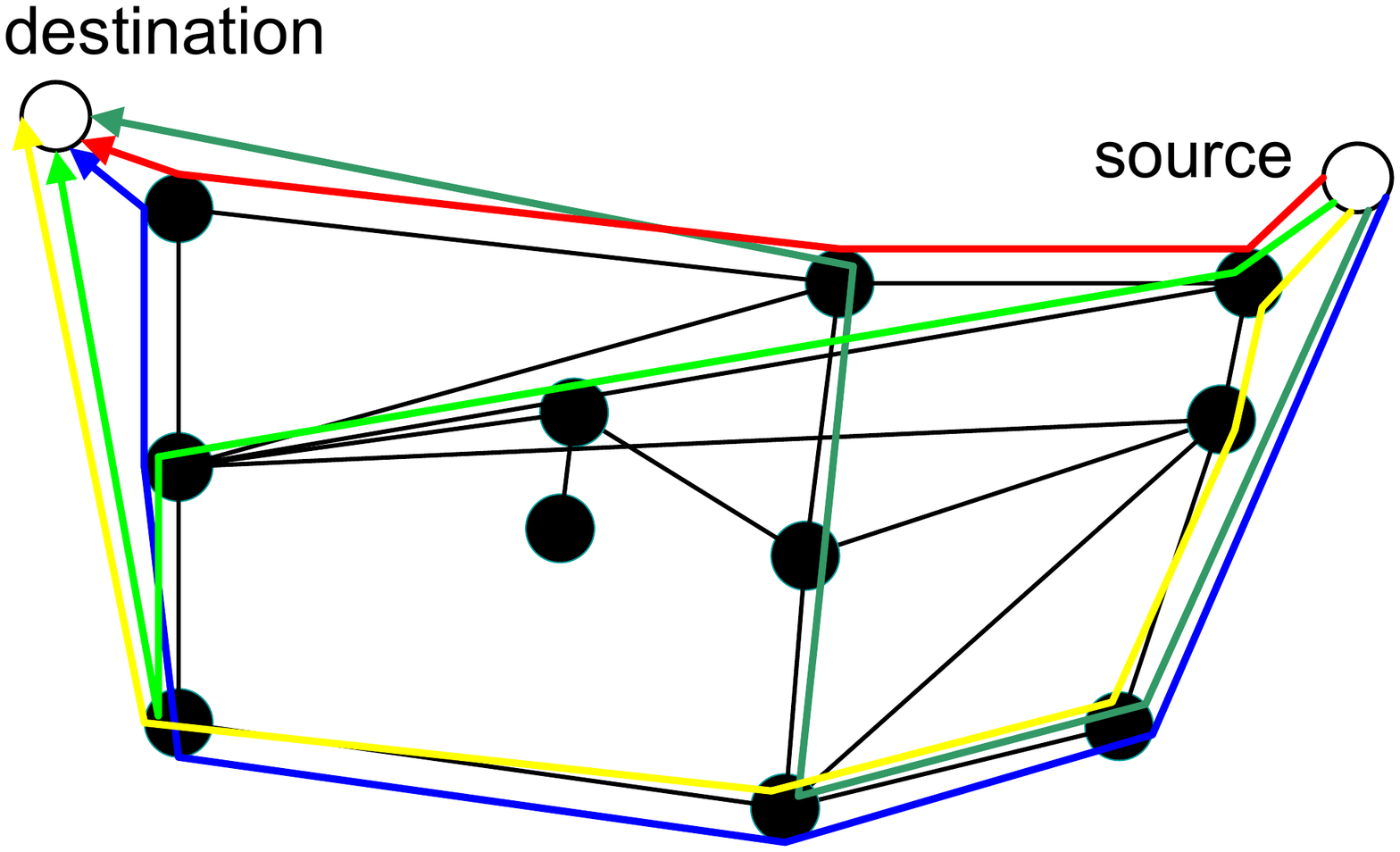}}
	\captionsetup{justification=centering}
  \caption{Sprint topology.}
  \label{fig:Sprintp}
  \end{subfigure}%
  \begin{subfigure}[t]{.33\textwidth}
  \centering
  \fbox{\includegraphics[height=4.5\grafflecm]{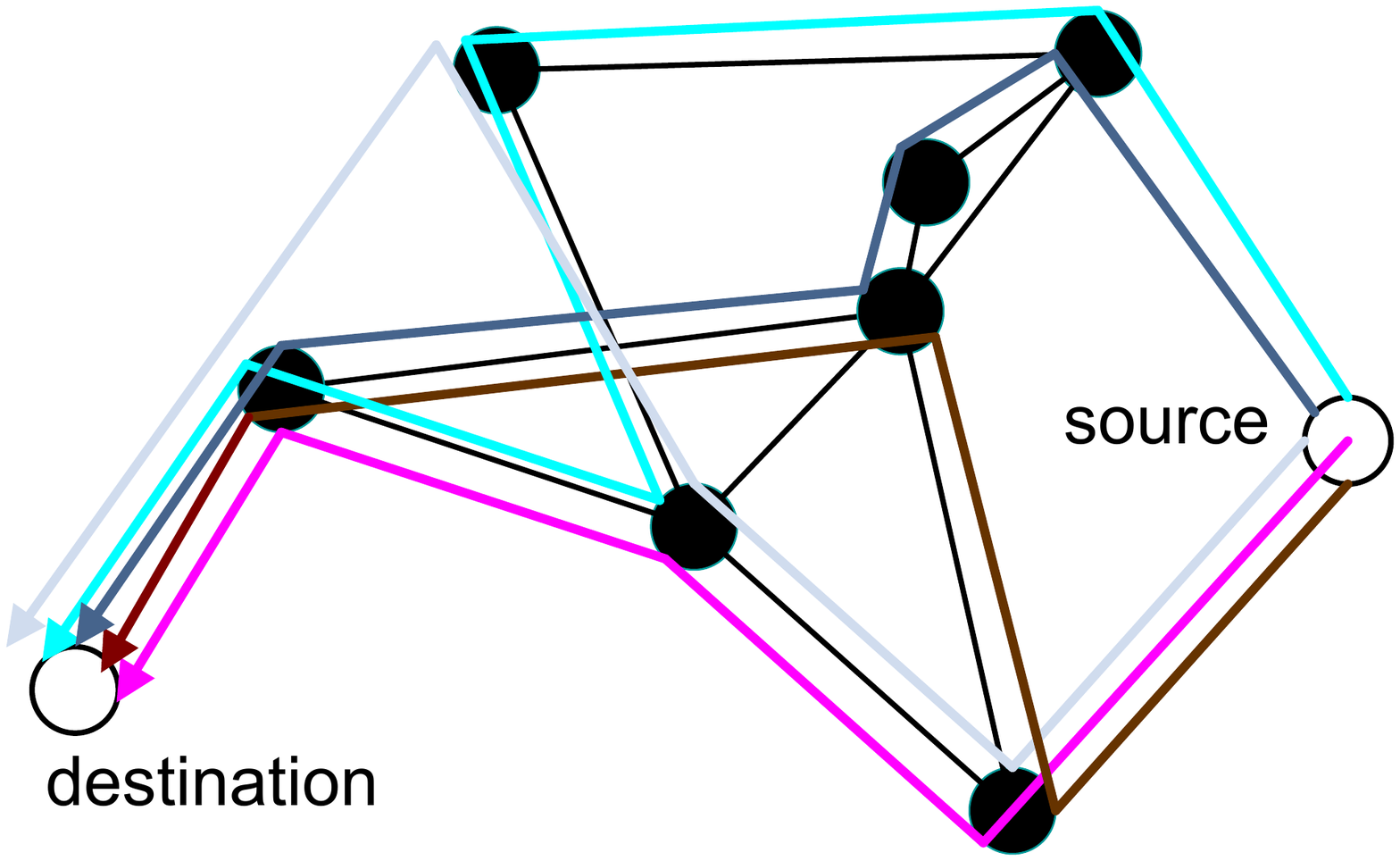}}
	\captionsetup{justification=centering}
  \caption{NetRail topology.}
  \label{fig:NetRailp}
  \end{subfigure}%
  \begin{subfigure}[t]{.33\textwidth}
  \centering
  \fbox{\includegraphics[height=4.5\grafflecm]{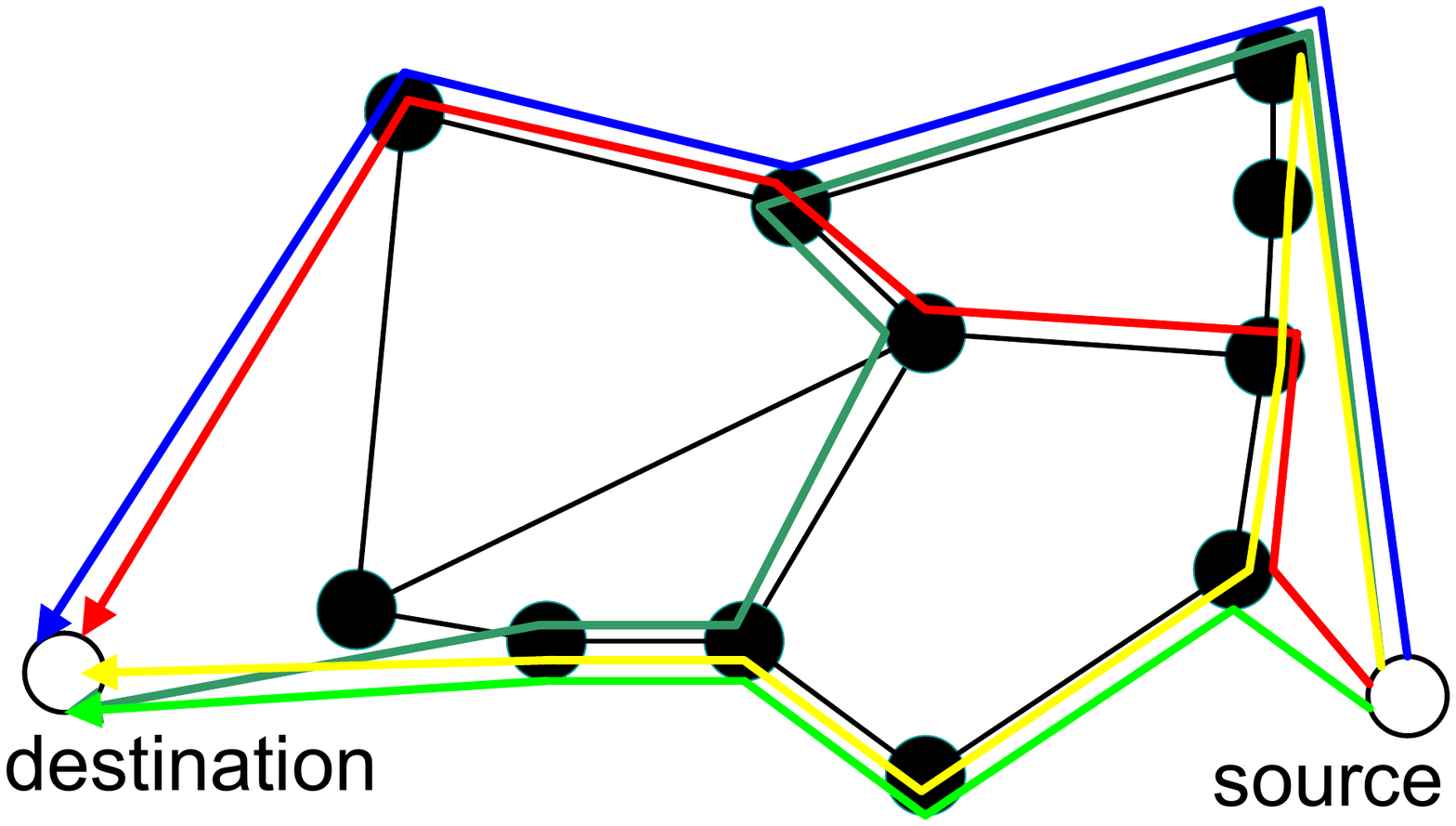}}
	\captionsetup{justification=raggedright}
  \caption{Compuserve topology.}
  \label{fig:Compuservep}
  \end{subfigure}%
  \caption{Test flows: each path of the test flows in our experiment is depicted by a different color. Black nodes are OpenFlow switches. White nodes represent the external source and destination of the test flows in the experiment.}
  \label{fig:paths}
\end{figure*}
\fi

\section{Evaluation}
Our evaluation was performed on a~50-node 
testbed in the DeterLab~\cite{DeterLabProj,mirkovic2012teaching} environment. The nodes (servers) in the DeterLab testbed are interconnected by a user-configurable topology. 

Each testbed node in our experiments ran a software-based OpenFlow switch that supports time-based updates, also known as \emph{Scheduled Bundles}~\cite{OpenFlow1.5}. A separate machine was used as a controller, which was connected to the switches using an out-of-band control network. 

The OpenFlow switches and controller we used are a version of OFSoftSwitch and Dpctl~\cite{CPqDOF}, respectively, that supports Scheduled Bundles~\cite{timeconf}. We used \rptp~\cite{hotsdnrptp,ispcsrptp} to guarantee synchronized timing.

\subsection{Experiment 1: Timed vs. Untimed \\ Updates}
We emulated a typical leaf-spine topology (e.g.,~\cite{Cisco}) of $N$ switches, with $\frac{2N}{3}$ leaf switches, and $\frac{N}{3}$ spine switches. 
The experiments were run using various values of $N$, between~$6$ and~$48$ switches. 

\begin{figure}[htbp]

	\centering
  \fbox{\includegraphics[width=.45\textwidth]{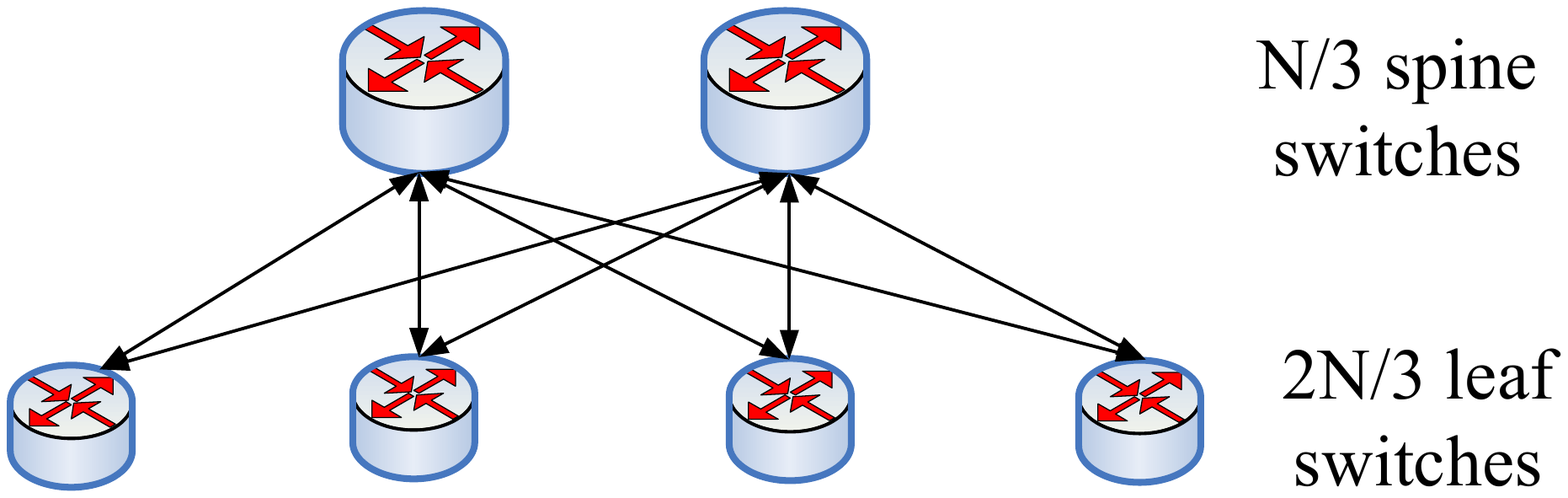}}
	\captionsetup{justification=centering}
  \caption{Leaf-spine topology.}
  \label{fig:Clos}

\end{figure}

We measured the delay upper bounds, $\dn$, $\dc$, $\delta$, and $\clat$. 
Table~\ref{table:attributes} presents the $99.9^{th}$ percentile delay values of each of these parameters. These are the parameters that were used in the controller's greedy updates.

\begin{table}[htbp]
		\centering
    \begin{tabular}{| p{1.5cm}<{\centering} | p{1.5cm}<{\centering} | p{1.5cm}<{\centering} | p{1.5cm}<{\centering} |}
    \hline
    $\dn$ & $\dc$ & $\delta$ & $\clat$ \\ \hline \hline
    0.262 & 4.865 & 1.297 & 5.24 \\ \hline 
    \end{tabular}
    \caption{The measured $99.9^{th}$ percentile of each of the delay attributes in milliseconds.}
    \label{table:attributes}
\end{table}

We observed a low network delay $\dn$, as it was measured over two hops of a local area network. In Experiment~2 we analyze networks with a high network delay. Note that the values of $\delta$ and $\dc$ were measured over software-based switches. Since hardware switches may yield different values, some of our experiments were performed with various synthesized values of $\delta$ and $\dc$, as discussed below. The measured value of $\clat$ was high, on the order of 5~milliseconds, as Dpctl is not optimized for performance.

\begin{sloppypar}
The experiments consisted of $3$-phase updates of a policy rule: (i) a phase~1 update, involving all the switches, (ii)~a~phase~2 update, involving only the leaf (ingress) switches, and (iii) a garbage collection phase, involving all the switches.
\end{sloppypar}

\ifdefined\JournalVer
\else

\begin{figure}[!b]
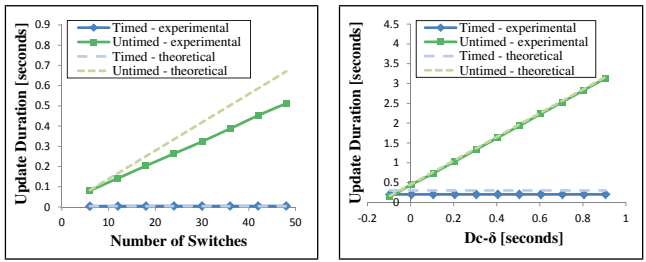


	\centering
  \begin{subfigure}[t]{.23\textwidth}
  \centering
  \fbox{\includegraphics[height=5.0\grafflecm]{NumOfSw}}
	\captionsetup{justification=centering}
  \caption{The update duration as a function of the number of switches.}
  \label{fig:NumOfSw}
  \end{subfigure}%
  \begin{subfigure}[t]{.27\textwidth}
  \centering
  \fbox{\includegraphics[height=5.0\grafflecm]{Dcdelta}}
	\captionsetup{justification=centering}
  \caption{The update duration as a function of $\dc-\delta$, for $N=12$, $\delta=100$ ms, various values of $\dc$.}
  \label{fig:Dcdelta}
  \end{subfigure}%

  \caption{Timed updates vs. untimed updates. Each figure shows the experimental values, and the theoretical worst-case values, based on Lemmas~\ref{TwoPhaseWorstLemma} and ~\ref{Dn3deltaLemma}.}
  \label{fig:TimeVsUntime}
\end{figure}

\fi

\textbf{Results.}
Fig.~\ref{fig:NumOfSw} compares the update duration of the timed and untimed approaches as a function of $N$. Untimed updates yield a significantly higher update duration, since they are affected by $(N_1+N_2 +\Ng_1-3) \cdot \clat$, per Lemma~\ref{TwoPhaseWorstLemma}.\footnote{The slope of the untimed curve in Fig.~\ref{fig:NumOfSw} is $\clat$, by Lemma~\ref{TwoPhaseWorstLemma}. The theoretical curve was computed based on the $99.9^{th}$ percentile value, whereas the mean value in our experiment was about 20\% lower, explaining the different slopes of the theoretical and experimental curves.} Hence, \textbf{the advantage of the timed approach increases with the number of switches} in the network, illustrating its scalability.

\ifdefined\JournalVer
The impact of the scheduling error on the update duration in the timed approach is illustrated in Fig.~\ref{fig:DurationVs_del}. As expected, the update duration grows linearly with $\delta$, however, the update duration of the untimed approach is expected to be higher, as typically $\delta < \dc$.
\fi

Fig.~\ref{fig:Dcdelta} shows the update duration of the two approaches as a function of $\dc - \delta$, as we ran the experiment with synthesized values of $\delta$ and $\dc$. We fixed $\delta$ at $100$~milliseconds, and tested various values of $\dc$. 
As expected (by Section~\ref{TimedVsUntimedSec}), the results show that for $\dc - \delta > 0$ the timed approach yields a lower update duration. Furthermore, only when the scheduling error, $\delta$, is significantly higher than $\dc$ does the untimed approach yield a shorter update duration. As discussed in Section~\ref{SchedAccSec}, we typically expect $\dc-\delta$ to be positive, as $\delta$ is unaffected by high network delays, and thus we expect the timed approach to prevail. Interestingly, the results show that \textbf{even when the scheduling is not accurate}, e.g., if $\delta$ is $100$ milliseconds worse than $\dc$, \textbf{the timed approach has a lower update duration}.

\ifdefined\JournalVer
Fig.~\ref{fig:DurationVsDn} illustrates the effect of the end-to-end network latency on the update duration. Both the timed and untimed approaches are linearly proportional to the network latency, following Lemmas~\ref{TwoPhaseWorstLemma} and ~\ref{Dn3deltaLemma}. However, the timed approach allows a lower update duration, as it is not affected by~$N$ and~$\clat$.
\fi

\ifdefined\JournalVer
\begin{figure*}[htbp]
  \centering
  \begin{subfigure}[t]{.33\textwidth}
  \centering
  \fbox{\includegraphics[height=4.5\grafflecm]{Sprintp}}
	\captionsetup{justification=centering}
  \caption{Sprint topology.}
  \label{fig:Sprintp}
  \end{subfigure}%
  \begin{subfigure}[t]{.33\textwidth}
  \centering
  \fbox{\includegraphics[height=4.5\grafflecm]{Netrailp}}
	\captionsetup{justification=centering}
  \caption{NetRail topology.}
  \label{fig:NetRailp}
  \end{subfigure}%
  \begin{subfigure}[t]{.33\textwidth}
  \centering
  \fbox{\includegraphics[height=4.5\grafflecm]{Compuservep}}
	\captionsetup{justification=raggedright}
  \caption{Compuserve topology.}
  \label{fig:Compuservep}
  \end{subfigure}%
  \caption{Test flows: each path of the test flows in our experiment is depicted by a different color. Black nodes are OpenFlow switches. White nodes represent the external source and destination of the test flows in the experiment.}
  \label{fig:paths}
\end{figure*}
\fi

\begin{figure*}[htbp]
	
	\centering
  \begin{subfigure}[t]{.33\textwidth}
  \centering
  \fbox{\includegraphics[height=5\grafflecm]{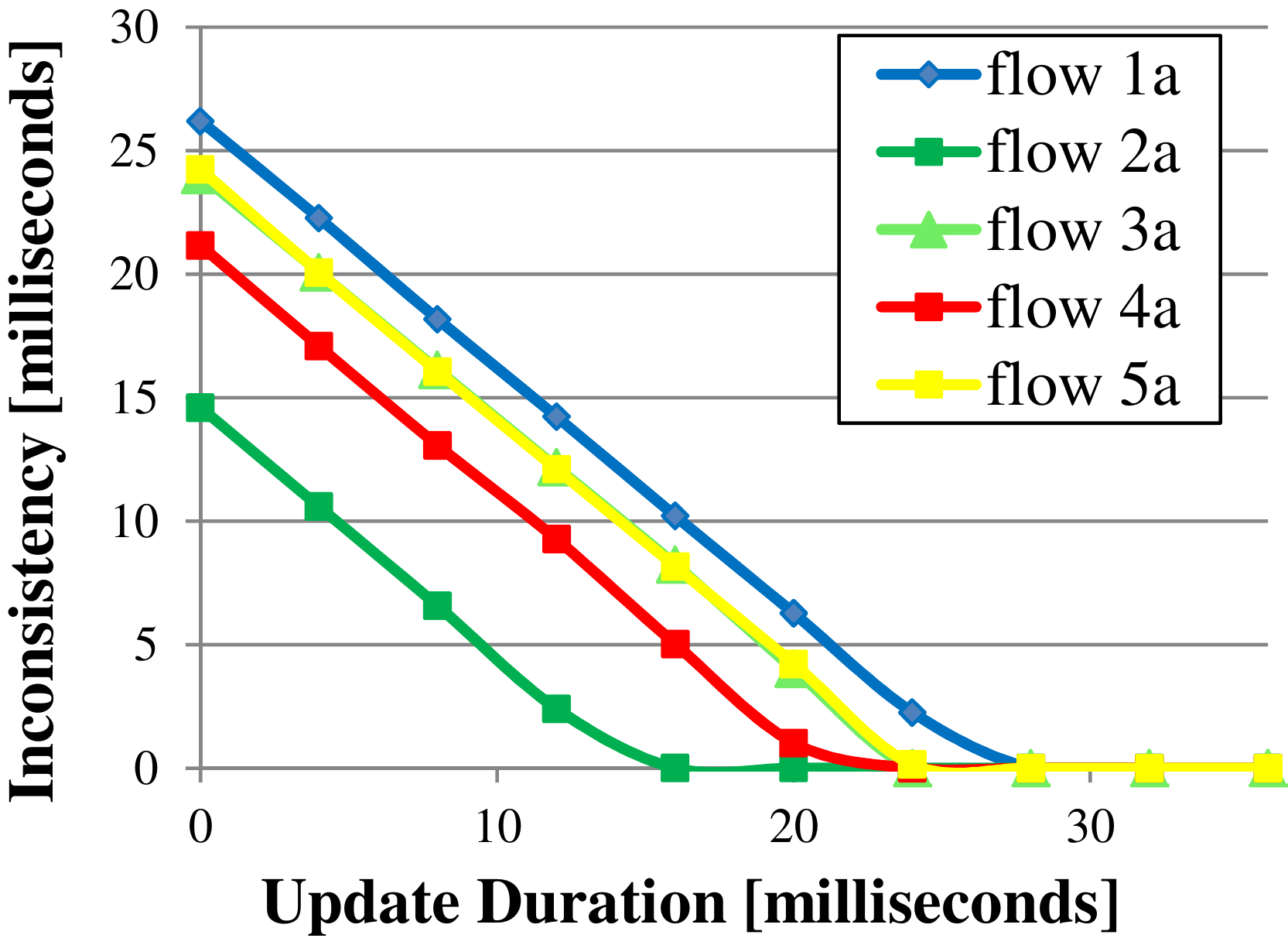}}
	\captionsetup{justification=centering}
  \caption{Sprint - constant network delay.}
  \label{fig:SprintStat}
  \end{subfigure}%
  \begin{subfigure}[t]{.33\textwidth}
  \centering
  \fbox{\includegraphics[height=5\grafflecm]{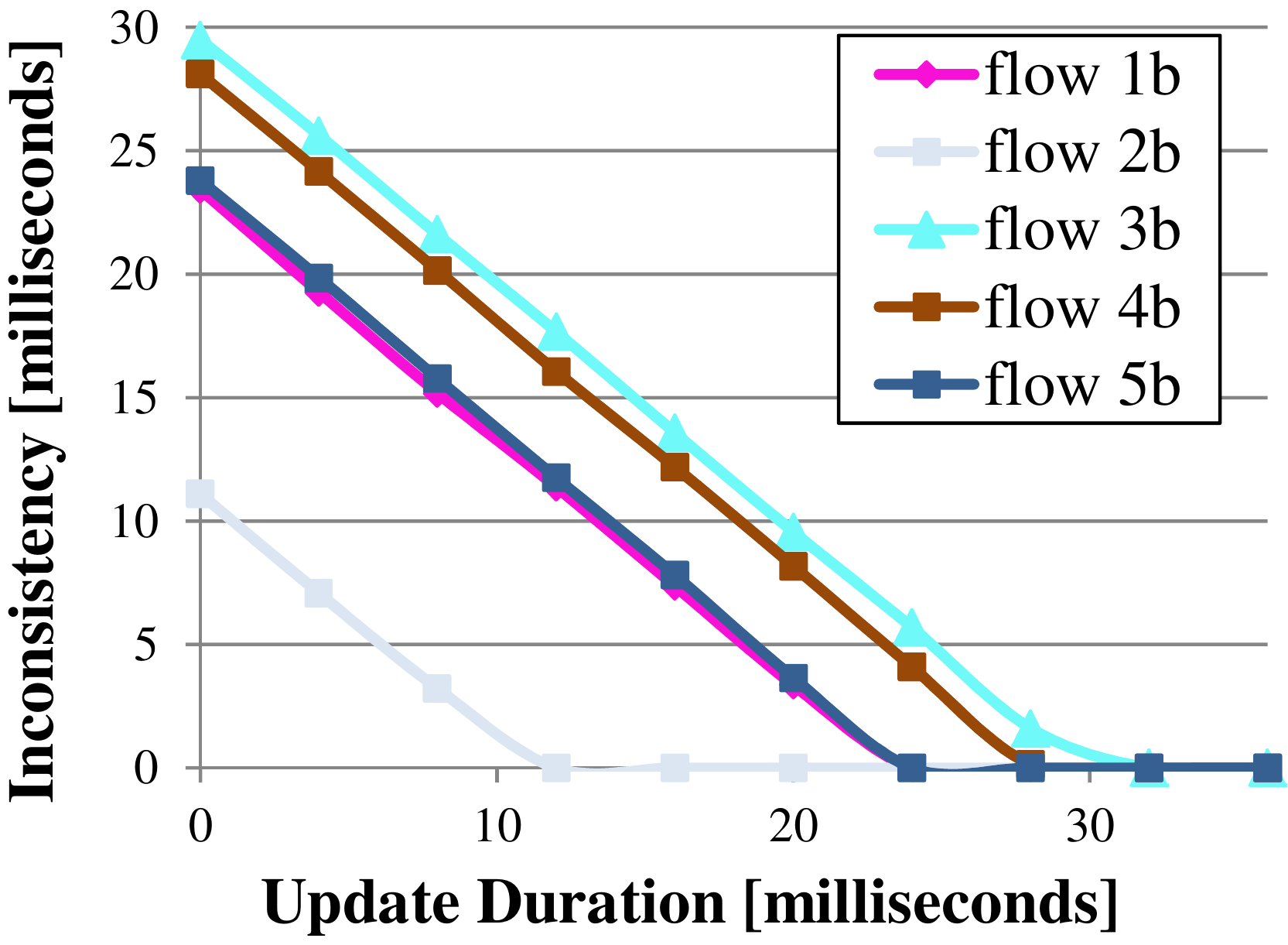}}
	\captionsetup{justification=centering}
  \caption{NetRail - constant network delay.}
  \label{fig:NetrailStat}
  \end{subfigure}%
  \begin{subfigure}[t]{.33\textwidth}
  \centering
  \fbox{\includegraphics[height=5\grafflecm]{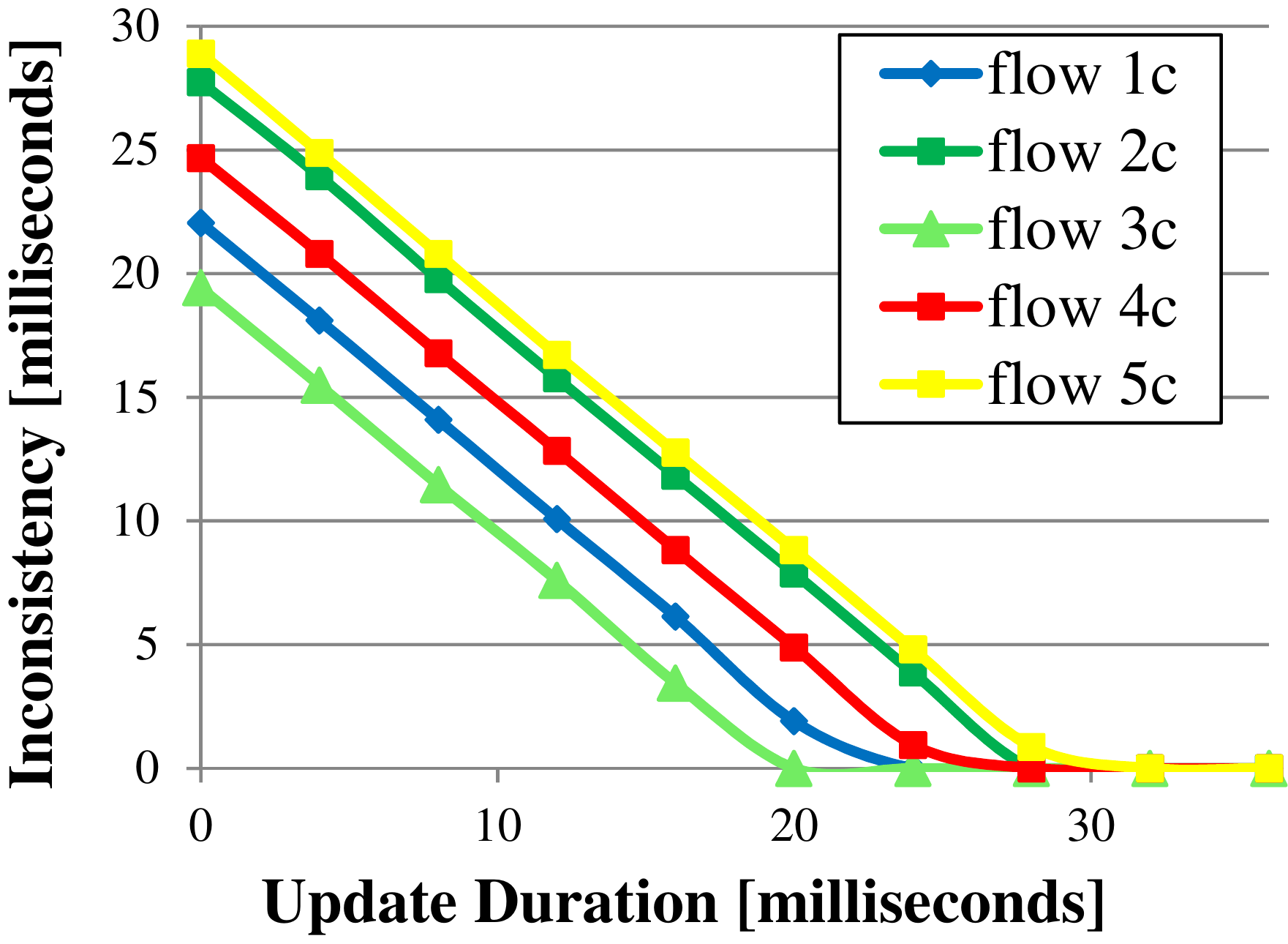}}
	\captionsetup{justification=raggedright}
  \caption{Compuserve - constant network delay.}
  \label{fig:CompuserveStat}
  \end{subfigure}%

  \begin{subfigure}[t]{.33\textwidth}
  \centering
  \fbox{\includegraphics[height=5\grafflecm]{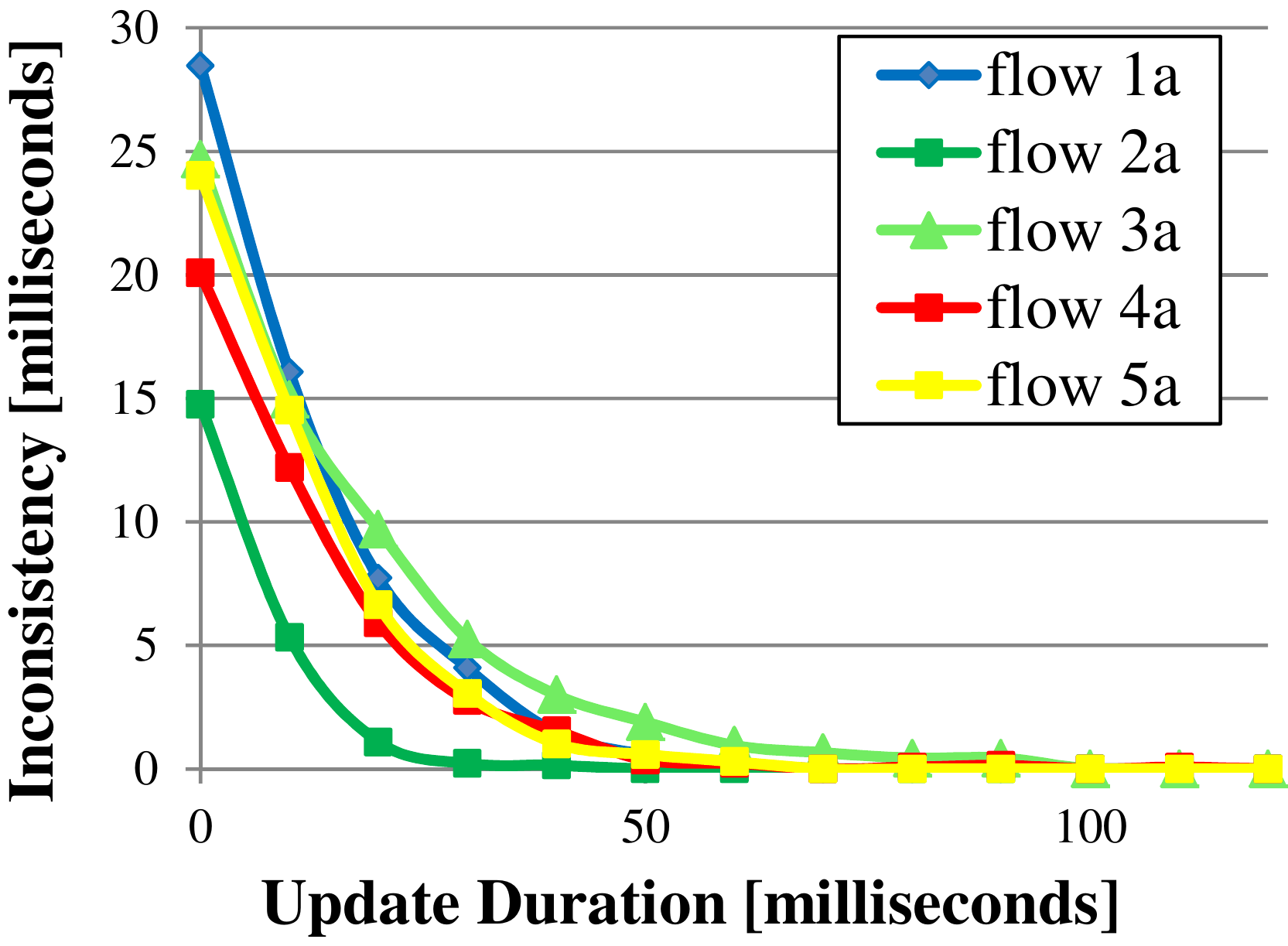}}
	\captionsetup{justification=centering}
  \caption{Sprint - exponential network delay.}
  \label{fig:SprintExp}
  \end{subfigure}%
  \begin{subfigure}[t]{.33\textwidth}
  \centering
  \fbox{\includegraphics[height=5\grafflecm]{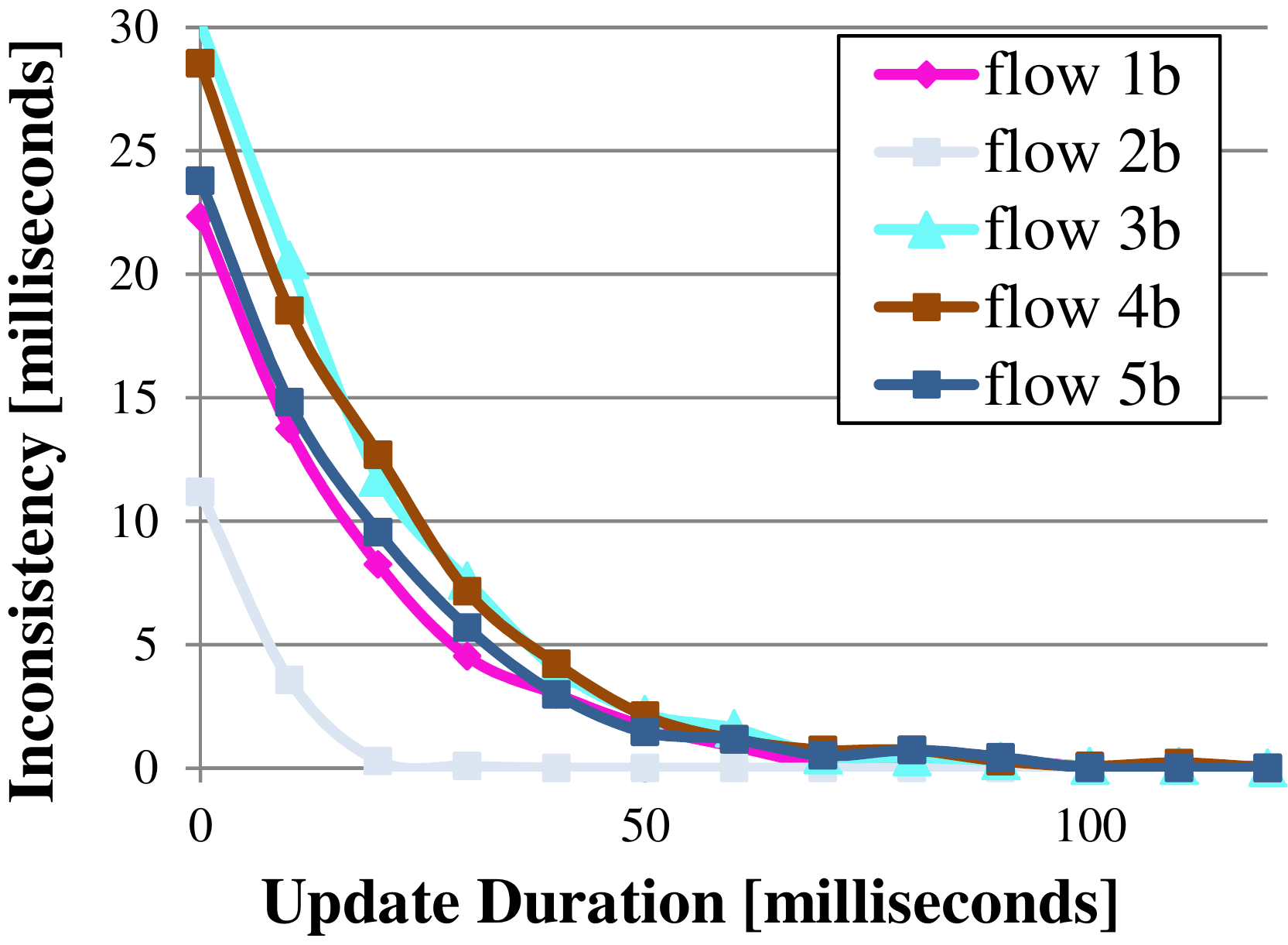}}
	\captionsetup{justification=centering}
  \caption{NetRail - exponential network delay.}
  \label{fig:NetrailExp}
  \end{subfigure}%
  \begin{subfigure}[t]{.33\textwidth}
  \centering
  \fbox{\includegraphics[height=5\grafflecm]{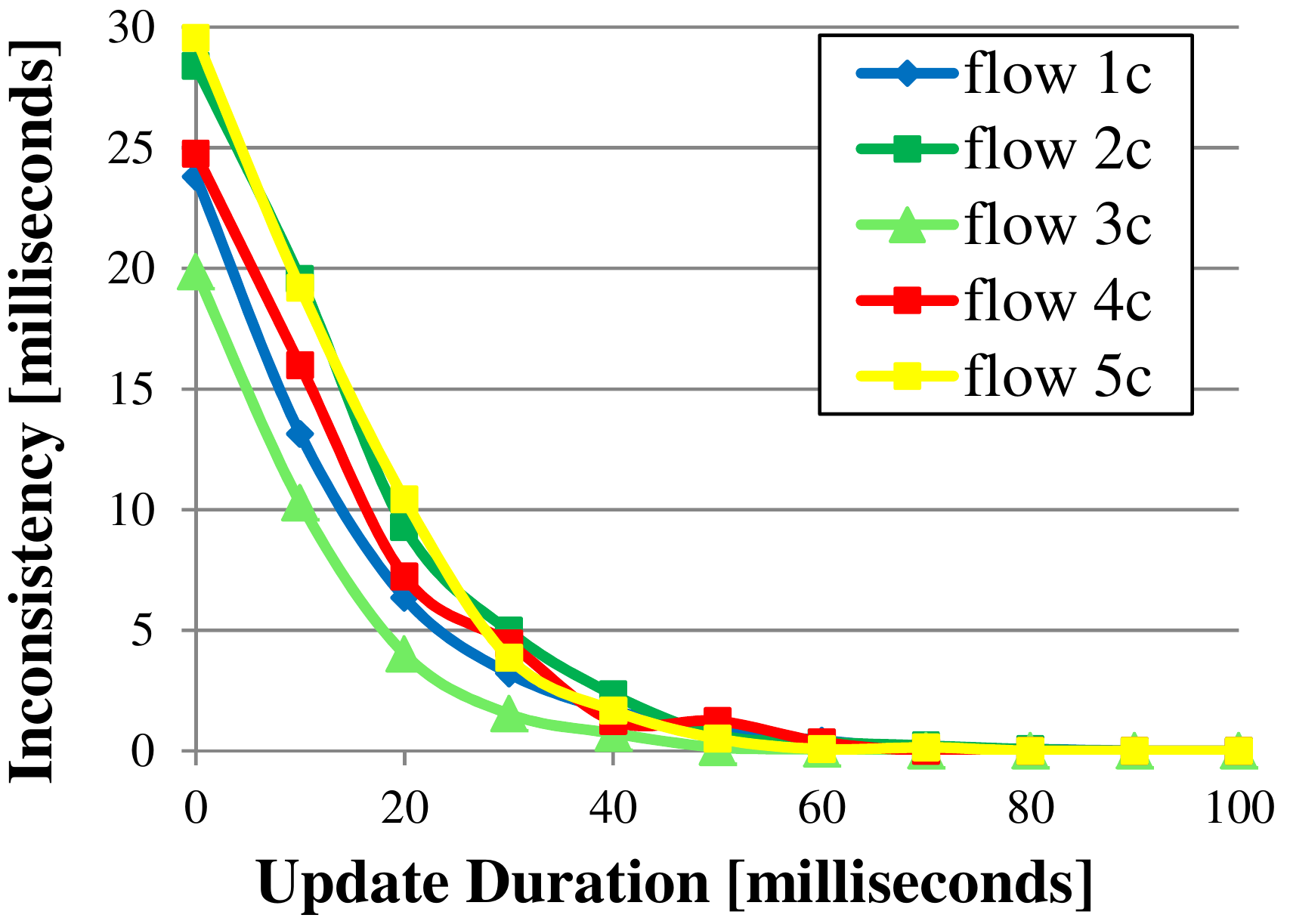}}
	\captionsetup{justification=raggedright}
  \caption{Compuserve - exponential network delay.}
  \label{fig:CompuserveExp}
  \end{subfigure}%
  \caption{Inconsistency as a function of the update duration. Modifying the update duration controls the degree of inconsistency. Two graphs are shown for each of the three topologies: exponential delay, constant delay.}
  \label{fig:inconsistency}
\end{figure*}

\subsection{Experiment 2: Fine Tuning Consistency}
The goal of this experiment was to study how time can be used to tune the level of inconsistency during updates. In order to experiment with real-life wide area network delay values, $\dn$, we performed the experiment using publicly available topologies.

\textbf{Network topology.} Our experiments ran over three publicly available service provider network topologies~\cite{Zoo}, as illustrated in~Fig.~\ref{fig:topologies}. We defined each node in the figure to be an OpenFlow switch. OpenFlow messages were sent to the switches by a controller over an out-of-band network (not shown in the figures).

\textbf{Network delays.} The public information provided in~\cite{Zoo} does not include the explicit delay of each path, but includes the coordinates of each node. Hence we derived the network delays from the beeline distance between each pair of nodes, assuming 5 microseconds per kilometer, as recommended in~\cite{G144}. The DeterLab testbed allows a configurable delay value to be assigned to each link. We ran our experiments in two modes:

\textbf{(i)} Constant delay --- each link had a constant delay that was configured to the value we computed as described above.

\textbf{(ii)} Exponential delay --- each link had an exponentially distributed delay.
The mean delay of each link in experiment (ii) was equal to the link delay of this link in experiment (i), allowing an `apples to apples' comparison.

\textbf{Test flows.} In each topology we ran five test flows, and measured the inconsistency during a timed network update. Each test flow was injected by an external source (see~\ref{fig:paths}) to one of the ingress switches, forwarded through the network, and transmitted from an egress switch to an external destination. Test flows were injected at a fixed rate of 40 Mbps using Iperf~\cite{Iperf}.

\textbf{Network updates.}
We performed \twophase\ updates of a Multiprotocol Label Switching (MPLS) label; a flow is forwarded over an MPLS Label-Switched Path (LSP) with label A, and then reconfigured to use label B. A garbage collection phase was used to remove the entries of label A. Conveniently, the MPLS label was also used as the version tag in the \twophase\ updates.

\textbf{Inconsistency measurement.}
For every test flow~$f$, and update~$U$, we measure the number of inconsistent packets during the update $n(f,U)$. Inconsistent packets in our context are either packets with a `new' label arriving to a switch without the `new' rule, or packets with an `old' label arriving to a switch without the `old' configuration. We used the switches' OpenFlow counters to count the number of inconsistent packets, $n(f,U)$. We compute the inconsistency of each update using Eq.~\ref{InconEq}. 

\textbf{Results.}
We measured the inconsistency $I$ during each update as a function of the update duration, ${T_g}_1 - T_1$. We repeated the experiment for each of the topologies and each of the test flows of Fig.~\ref{fig:paths}. 

The results are illustrated in Fig.~\ref{fig:inconsistency}. The figure depicts the tradeoff between the update duration, and the inconsistency during the update. A long update duration bares a cost on the switches' expensive memory resources, whereas a high degree of inconsistency implies a large number of dropped or misrouted packets.

Using a timed update, it is possible to tune the difference ${T_g}_1 - T_1$, directly affecting the degree of inconsistency. An SDN programmer can tune ${T_g}_1 - T_1$ to the desired sweet spot based on the system constraints; if switch memory resources are scarce, one may reduce the update duration and allow some inconsistency.

As illustrated in Fig.~\ref{fig:SprintExp}, ~\ref{fig:NetrailExp}, and~\ref{fig:CompuserveExp}, this fine tuning is especially useful when the network latency has a long-tailed distribution. A truly consistent update, where $I=0$, requires a very long and costly update duration. As shown in the figures, by slightly compromising $I$, the switch memory overhead during the update can be cut in half.

\section{Discussion}

\textbf{Failures.} Switch failures during an update procedure may compromise the consistency during an update. For example, a switch may silently fail to perform an update, thereby causing inconsistency. Both the timed and untimed update approaches may be affected by failure scenarios. The OpenFlow Scheduled Bundle~\cite{OpenFlow1.5} mechanism provides an elegant mechanism for mitigating failures in timed updates; if the controller detects a switch failure \emph{before} an update is scheduled to take place, it can send a cancellation message to all the switches that take part in the scheduled update, thus guaranteeing an \emph{all-or-none} behavior.

\textbf{Explicit acknowledgment.} As discussed in Section~\ref{DelayUpperSec}, OpenFlow currently does not support an explicit acknowledgment (ACK) mechanism. In the absence of ACKs, update procedures are planned according to a \emph{worst-case analysis} (Section~\ref{WorstSec}), both in the timed and in the untimed approaches. However, if switches are able to notify the controller upon completion of an update (as assumed in~\cite{jin2014dynamic}), then update procedures can sometimes be completed earlier than without using ACKs. Furthermore, ACKs enable updates to be performed dynamically~\cite{jin2014dynamic}, whereby at the end of each phase the controller dynamically plans the next phase. Fortunately, the timed and untimed approaches can be combined. For example, in the presence of an acknowledgment mechanism, update procedures can be performed in a dynamic, untimed, ACK-based manner, with a timed garbage collection phase at the end. This flexible mix-and-match approach allows the SDN programmer to enjoy the best of both worlds.

\section{Related Work}
The use of time in distributed applications has been widely analyzed, both in theory and in practice. Analysis of the usage of time and synchronized clocks, e.g., Lamport~\cite{lamport1978time, LamportTimeout} dates back to the late 1970s and early 1980s. Accurate time has been used in various different applications, such as distributed database~\cite{corbett2012spanner}, industrial automation systems~\cite{harris2008application}, automotive networks~\cite{IEEETSN}, and accurate instrumentation and measurements~\cite{moreira2009white}. While the usage of accurate time in distributed systems has been widely discussed in the literature, we are not aware of similar analyses of the usage of accurate time as a means for performing consistent updates in computer networks.

Time-of-day routing~\cite{ash1985use} routes traffic to different destinations based on the time-of-day. Path calendaring~\cite{kandula2014calendaring} can be used to configure network paths based on scheduled or foreseen traffic changes. The two latter examples are typically performed at a low rate and do not place demanding requirements on accuracy. 

In~\cite{greenberg2005clean} the authors briefly mentioned that it would be interesting to explore using time synchronization to instruct routers or switches to change from one configuration to another at a specific time, but did not pursue the idea beyond this observation.
Our previous work~\cite{hotsdn,onstime} introduced the concept of using time to coordinate updates in SDN. Based on our work~\cite{timeconf}, the OpenFlow protocol~\cite{OpenFlow1.5,OpenFlow1.3ext} currently supports time-based network updates. In~\cite{Infocom-TimeFlip} we presented a practical method to implement accurately scheduled network updates. In this paper we analyze the use of time in \emph{consistent} updates, and show that time can improve the scalability of consistent updates.

Various consistent network update approaches have been analyzed in the literature. Two of the most well-known update methods are the ordered approach~\cite{francois2007avoiding, vanbever2011seamless, liu2013zupdate, jin2014dynamic}, and the two-phase approach~\cite{reitblatt2012abstractions,katta2013incremental}. None of these works proposed to use accurate time and synchronized clocks as a means to coordinate the updates. In this paper we show that time can be used to improve these two methods, allowing to reduce the overhead during update procedures.

The analysis of~\cite{katta2013incremental} proposed an incremental method that improves the scalability of consistent updates by breaking each update into multiple independent rounds, thereby reducing the total overhead consumed in each separate round. The timed approach we present in this paper can improve the incremental method even further, by reducing the overhead consumed in each round.

\section{Conclusion}
Accurate time synchronization has become a common feature in commodity switches and routers. We have shown that it can be used to implement consistent updates in a way that reduces the update duration and the expensive overhead of maintaining duplicate configurations. Moreover, we have shown that accurate time can be used to tune the fine tradeoff between consistency and scalability during network updates.
Our experimental evaluation demonstrates that timed updates allow scalability that would not be possible with conventional update methods.

\section*{Acknowledgments}
This work was supported in part by the ISF grant 1520/11. We gratefully acknowledge the DeterLab project~\cite{DeterLabProj} for the opportunity to perform our experiments on the DeterLab testbed.


\ifdefined\IncludeAppendix
\begin{appendices}
\section{Appendix: Dataset Details}
\label{DatasetApp}
The measurement results presented in Section~\ref{DelayUpperSec} are based on publicly available datasets from~\cite{pinger,AMP}. The data we analyzed consists of RTT measurements between 20 source-destination pairs, listed in Table~\ref{table:Traces}. The data is based on measurements taken from November~2013 to November~2014.

\begin{table}[htbp]
    \begin{tabular}{| p{2.8cm}<{\centering} | p{2.8cm}<{\centering} | p{1cm}<{\centering} |}
    \hline
    Source site & Destination site & Trace source \\ \hline \hline
    ping.desy.de & ba.sanet.sk & \cite{pinger} \\ \hline 
    pinger.stanford.edu & ihep.ac.cn & \cite{pinger} \\ \hline 
    pinger.stanford.edu & institutokilpatrick .edu & \cite{pinger} \\ \hline 
    pinger.uet.edu.pk & ping.cern.ch & \cite{pinger} \\ \hline 
    pinger2.if.ufrj.br & ping.cern.ch & \cite{pinger} \\ \hline 
    pinger.arn.dz & dns.sinica.edu.tw & \cite{pinger} \\ \hline 
    pinger.stanford.edu & ping.cern.ch & \cite{pinger} \\ \hline 
    pinger.stanford.edu & mail.gnet.tn & \cite{pinger} \\ \hline 
    pinger.stanford.edu & tg.refer.org & \cite{pinger} \\ \hline 
    pinger.stanford.edu & www.unitec.edu & \cite{pinger} \\ \hline 
    ampz-catalyst & ampz-citylink & \cite{AMP} \\ \hline 
    ampz-inspire & ampz-massey-pn & \cite{AMP} \\ \hline 
    ampz-netspace & ampz-inspire & \cite{AMP} \\ \hline 
    ampz-ns3a & ampz-citylink & \cite{AMP} \\ \hline 
    ampz-ns3a & www.stuff.co.nz & \cite{AMP} \\ \hline 
    ampz-rurallink & www.facebook.com & \cite{AMP} \\ \hline 
    ampz-rurallink & www.google.co.nz & \cite{AMP} \\ \hline 
    ampz-waikato & www.facebook.com & \cite{AMP} \\ \hline 
    ampz-waikato & www.google.co.nz & \cite{AMP} \\ \hline 
    ampz-wxc-akl & ampz-csotago & \cite{AMP} \\ \hline 
    \end{tabular}
    \caption{List of delay measurement traces.}
    \label{table:Traces}
\end{table}
\end{appendices}
\fi

\ifdefined\TechReport
\else
\clearpage
\fi

\bibliographystyle{abbrv}
\bibliography{time}

\end{document}